		\newtheorem{theorem}{Theorem}
                \newtheorem{lemma}[theorem]{Lemma}
                \newtheorem{corollary}[theorem]{Corollary}
                \newtheorem{definition}[theorem]{Definition}
                \newtheorem{observation}[theorem]{Observation}
                \theoremstyle{definition}
                \theoremstyle{remark}
                \newtheorem*{note*}{Note}
                \newtheorem*{remark*}{Remark}
                \theoremstyle{claimstyle}
                \newtheorem*{claim*}{Claim}
\newcommand{\True}{\textsf{true}}
\newcommand{\False}{\textsf{false}}
\begin{document}
\title{Parameterized temporal exploration problems\thanks{A preliminary version of this paper appeared in the proceedings of the 1st Symposium on Algorithmic Foundations of Dynamic Networks (SAND 2022), volume 221 of LIPICs, article 15, 2022. DOI 10.4230/LIPIcs.SAND.2022.15}}

\author{Thomas Erlebach\thanks{Department of Computer Science, Durham University. Research supported by EPSRC grants EP/S033483/2 and EP/T01461X/1.}
\and
Jakob T. Spooner\thanks{School of Computing and Mathematical Sciences, University of Leicester.}}

\date{}

\maketitle

\begin{abstract}
In this paper we study the fixed-parameter tractability of the problem of deciding whether a given 
temporal graph $\mathcal{G}$ admits a temporal walk
that visits all vertices (temporal exploration) or, in some problem variants,
a certain subset of the vertices. Formally, a temporal graph is a
sequence $\mathcal{G} = \langle G_1,\ldots, G_L\rangle$ of graphs with $V(G_t) = V(G)$ and $E(G_t) \subseteq E(G)$ for
all $t \in [L]$ and some underlying graph $G$, and 
a temporal walk is a time-respecting sequence of edge-traversals.
We consider both the strict variant, in which edges must be
traversed in strictly increasing timesteps, and the non-strict variant,
in which an arbitrary number of edges can be traversed in each timestep.
For both variants, we give \textsf{FPT} algorithms
for the problem of finding a temporal walk that visits a given set $X$ of
vertices, parameterized by $|X|$, and for the problem of finding a temporal
walk that visits at least $k$ distinct vertices in~$V(G)$, parameterized by~$k$.
We also show $\textsf{W}[2]$-hardness for a set version of the temporal
exploration problem for both variants.
For the non-strict variant, we give an \textsf{FPT} algorithm for
the temporal exploration problem parameterized by the lifetime
of the input graph, and we show that the temporal exploration problem
can be solved in polynomial time if the graph in each timestep has at
most two connected components.
\end{abstract}

\section{Introduction}
The problem of computing a series of consecutive edge-traversals in a static (i.e., classical discrete)
graph $G$, such that each vertex of $G$ is an endpoint of at least one traversed edge, 
is a fundamental problem in algorithmic graph theory, and an early formulation
was provided by Shannon~\cite{Shannon_51}. Such a sequence of edge-traversals might be 
referred to as an \textit{exploration} or \textit{search} of $G$ and, from a computational standpoint, 
it is easy to check whether a given graph $G$ admits such an exploration and easy to compute one if the answer 
is yes -- we simply carry out a depth-first search starting at an arbitrary start vertex
in $V(G)$ and check whether every vertex of $G$ is reached. We consider in this paper
a decidedly more complex variant of the problem, in which we try to find an 
exploration of a \textit{temporal graph}. A temporal graph $\mathcal{G} = \langle G_1,\ldots,G_L\rangle$
is a sequence of static graphs $G_t$ such that $V(G_t) = V(G)$ and $E(G_t) \subseteq E(G)$
for any \textit{timestep} $t \in [L]$ and some fixed \textit{underlying graph}~$G$. 

A concerted effort to tackle algorithmic problems defined for temporal graphs has been made in recent years. With the addition of time to a graph's structure comes more freedom when defining a problem. Hence, many studies have focused on temporal variants of classical graph problems: for example, the travelling salesperson problem \cite{Michail_16}; shortest paths \cite{WCHKLX_14}; vertex cover~\cite{AMSZ_20}; maximum matching \cite{MMNZZ_20}; network flow problems~\cite{ACGKS_19}; and a number of others. For more examples, we point the reader to the works of Molter~\cite{Molter_20} or Michail~\cite{Michail_16}. One seemingly common trait of the problems that many of these studies consider is the following: \textit{Problems that are easy for static graphs often become hard on temporal graphs, and hard problems for static graphs remain hard on temporal graphs}. This certainly holds true for the problem of deciding whether a given temporal graph $\mathcal{G}$ admits a \textit{temporal walk} $W$ -- roughly speaking, a sequence of edges traversed consecutively and during strictly increasing timesteps -- such that every vertex of $\mathcal{G}$ is an endpoint of at least one edge of $W$ (any temporal walk with this property is known as an \textit{exploration schedule}). Indeed, Michail and Spirakis~\cite{MS_16} showed that this problem, \textsc{Temporal Exploration} or \textsc{TEXP} for short, is \textsf{NP}-complete. In this paper, we consider variants of the \textsc{TEXP} problem from a fixed-parameter perspective and under both \textit{strict} and \textit{non-strict} settings. More specifically, we consider problem variants in which we look for \textit{strict} temporal walks, which traverse each consecutive edge at a timestep strictly larger than the previous, as well as variants that ask for \textit{non-strict} temporal walks, which allow an unlimited but finite number of edges to be traversed in  each timestep. 
 
\subsection{Contribution}
\begin{table}
\caption{Overview of results. The parameters are: $L$ = lifetime, $\gamma$ = maximum number of
connected components per step, $k$ = number of vertices to be visited.}
\label{tab:overview}%
\medskip
\small\centering
\begin{tabular}{lccc}
Problem & Parameter & strict & non-strict \\ \hline
\textsc{TEXP} & $L$ & FPT & \textsf{FPT} \\
& & Corollary~\ref{cor:texplFPT} & Theorem~\ref{thm:nstexpl} \\[1ex]
\textsc{TEXP} & $\gamma$ & NPC for $\gamma=1$ & poly for $\gamma=1,2$ \\
& & Observation \ref{obs:gammaOne} & Theorem~\ref{thm:gammaTwo}\\[1ex]
\textsc{$k$-fixed TEXP} & $k$ & \textsf{FPT} & \textsf{FPT} \\
& & Theorem~\ref{thm:kfixedFPT} & Corollary~\ref{cor:NSkfixedFPT} \\[1ex]
\textsc{$k$-arbitrary TEXP} & $k$ & \textsf{FPT} & \textsf{FPT} \\
& & Theorems~\ref{thm:karbFPT}, \ref{thm:karbFPTdet} & Corollary~\ref{cor:NSkarbFPT} \\[1ex]
\textsc{Set-TEXP} & $L$ & $\textsf{W}[2]$-hard & $\textsf{W}[2]$-hard \\
& & Theorem~\ref{thm:sethard} & Theorem~\ref{thm:nssethard}
\end{tabular}%
\end{table}%
An overview of our results is shown in Table~\ref{tab:overview}.
After presenting preliminaries and problem definitions in Section~\ref{sec:prelim},
we show in Section \ref{sec:stricttexp} for the strict setting that two natural parameterized variants of \textsc{TEXP} are in \textsf{FPT}. Firstly, we parameterize by the size $k$ of a fixed subset of the vertex set and ask for an exploration schedule that visits at least these vertices, providing an $O(2^k kLn^2)$-time algorithm. Secondly, we parameterize by only an integer $k$ and ask that a computed solution visits at least $k$ arbitrary vertices -- in this case we specify, for any $\varepsilon > 0$, a randomized algorithm (based on the colour-coding technique first introduced by Alon et al.~\cite{AYZ_95}) with running time $O((2e)^k L n^3 \log \frac{1}{\varepsilon})$. A now-standard derandomization technique~\cite{AYZ_95,NSS_95}
is then utilized in order
to obtain a deterministic $(2e)^k k^{O(\log k)} L n^3 \log n$-time algorithm.
Furthermore, we show that a generalized variant, \textsc{Set TEXP}, in which we are supplied with $m$ subsets of the input temporal graph's vertex set and are asked to decide whether there exists a strict temporal walk that visits at least one vertex belonging to each set, is $\textsf{W}[2]$-hard.

In Section \ref{sec:nstexp}, we consider the non-strict variant known as \textsc{Non-Strict Temporal Exploration}, or \textsc{NS-TEXP}, which was introduced in~\cite{ES_20}. Here, a candidate exploration schedule is permitted to traverse an unlimited but finite number of edges during each timestep, and it is not too hard to see that this change alters the problem's structure quite drastically (more details in Sections \ref{subsec:prelim_nonstrict} and~\ref{sec:nstexp}). We therefore use a different model of temporal graphs to the one considered in Section \ref{sec:stricttexp}, which we properly define later. In this model, an exploration schedule may exist even if the lifetime $L$ is much smaller than the number $n$ of vertices.
Nevertheless, we show that \textsc{NS-TEXP} parameterized by $L$ is \textsf{FPT} by giving an $O(L(L!)^2n)$-time recursive search-tree algorithm.
Furthermore, we show that the \textsf{FPT} algorithms for visiting $k$ fixed vertices or $k$ arbitrary vertices,
where $k$ is taken as the parameter, can be adapted from the strict to the non-strict
case, while saving a factor of $n$ in the running-time.
For the case that the maximum number of components in each step is bounded by~$2$, we show that all four
non-strict problem variants can be solved in polynomial time.
For the non-strict variant of \textsc{Set TEXP}, we show $\textsf{W}[2]$-hardness.

\subsection{Related work}
We refer the interested reader to Casteigts et al. \cite{CFQS_12} for a study of various models of dynamic graphs, and to Michail \cite{Michail_16} for an introduction to temporal graphs and some of their associated combinatorial problems.
Brod\'en et al.~\cite{BHN_04} considered the \textsc{Temporal Travelling Salesperson Problem} for complete temporal graphs with $n$ vertices. The costs of edges are allowed to differ between 1 and 2 in each
timestep. They showed that when an edge's cost changes at most $k$ times during the
input graph's lifetime, the problem is \textsf{NP}-complete,
but provided a $(2-\frac{2}{3k})$-approximation. For the same problem, Michail and
Spirakis~\cite{MS_16} proved \textsf{APX}-hardness and provided a
$(1.7 + \epsilon)$-approximation. Bui-Xuan et
al.~\cite{BFJ_03} proposed multiple objectives for optimisation when computing
temporal walks/paths: e.g., \textit{fastest} (fewest number of timesteps used) and \textit{foremost} (arriving at the destination
at the earliest time possible).

Michail and Spirakis~\cite{MS_16} introduced the \textsc{TEXP} problem, which
asks whether or not a given temporal graph admits a temporal walk that visits
all vertices at least once.
The problem was shown to be
$\textsf{NP}$-complete when no restrictions are placed on the input, 
and they proposed considering the problem under the \textit{always-connected}
assumption as a means of ensuring that exploration is possible (provided the lifetime of
the input graph is sufficiently long). Erlebach et
al.~\cite{EHK_21} considered the problem of computing foremost exploration schedules under the always-connected assumption, proving $O(n^{1-\varepsilon})$-inapproximability (for any $\varepsilon > 0$).
They also showed that subquadratic exploration schedules exist for temporal
graphs whose underlying graph is planar, has bounded treewidth, or is a
$2\times n$~grid. Furthermore, they proved that cycles and cycles with one chord can
be explored in $O(n)$ steps.
Bodlaender and van der Zanden~\cite{BZ_19} examined
the \textsc{TEXP} problem when restricted to always-connected temporal graphs whose underlying graph has pathwidth at most 2, showing the problem to be $\textsf{NP}$-complete in this case. 

Later, Erlebach et al.~\cite{EKLSS/19} showed that temporal graphs can be explored
in $O(n^{1.75})$ steps if the graph in each step admits a spanning-tree of
bounded degree or if one is allowed to traverse two edges per step.
Taghian Alamouti~\cite{A/20} showed that a cycle with $k$ chords can be
explored in $O(k^2\cdot k!\cdot (2e)^k \cdot n)$ timesteps. Adamson et
al.~\cite{AGMZ/22} improved this bound for cycles with $k$ chords to $O(kn)$ timesteps.
They also improved the
bounds on the worst-case exploration time for temporal graphs whose underlying
graph is planar or has bounded treewidth.

Akrida et al.~\cite{AMSR_21} considered a \textsc{TEXP} variant called
\textsc{Return-To-Base TEXP}, in which the underlying graph is a star and a candidate solution must return to the
vertex from which it initially departed (the star's centre). They proved various 
hardness results and provided polynomial-time algorithms for some special cases.
Casteigts et al.~\cite{CHMZ_21} studied the fixed-parameter tractability of the problem of finding 
temporal paths between a source and destination that wait no longer than $\Delta$ consecutive timesteps at any intermediate vertex.
Bumpus and Meeks~\cite{BM_21} considered, again from a fixed-parameter perspective, a temporal
graph exploration variant in which the goal is no longer to visit all of the input graph's vertices at least once, 
but to traverse all edges of its underlying graph exactly once (i.e., computing a temporal Eulerian circuit). They also resolved the complexity of the two cases of the \textsc{Return-To-Base TEXP}
problem that had been left open by~\cite{AMSR_21}.

The problem of \textsc{Non-Strict Temporal Exploration} was introduced and 
studied in~\cite{ES_20}. Here, a computed walk may make an unlimited
number of edge-traversals in each given timestep. Amongst other things,  
\textsf{NP}-completeness of the general problem was shown, as well as $O(n^{1/2-\varepsilon})$ and 
$O(n^{1-\varepsilon})$-inapproximability for the problem of minimizing the arrival time
of a temporal exploration in the cases
where the number of timesteps required to reach any vertex $v$ from any vertex $u$ is bounded by $c=2$ and $c=3$, respectively. Notions of strict/non-strict
paths which respectively allow for a single edge/unlimited number of edge(s) to
be crossed in any timestep have been considered before, notably by Kempe et
al.~\cite{KKK_00} and Zschoche et al.~\cite{ZFMN_20}.

\section{Preliminaries}
\label{sec:prelim}%
For a pair of integers $x,y$ with $x \leq y$ we denote by $[x,y]$ the set $\{z : x \leq z \leq y\}$; if $x = 1$ we write $[y]$ instead. 
We use standard terminology from graph theory~\cite{Diestel_00}, and we
assume any static graph $G = (V,E)$ to be simple and undirected.
A parameterized problem is a language $L\subseteq \Sigma^*\times\mathbb{N}$,
where $\Sigma$ is a finite alphabet. For an instance $(I,k)\in \Sigma^*\times\mathbb{N}$,
$k$~is called the parameter. The problem is in \textsf{FPT} (fixed-parameter tractable)
if there is an algorithm that solves every instance in time $f(k)\times |I|^{O(1)}$
for some computable function~$f$. A proof that a problem is hard for complexity class
$\textsf{W}[r]$ for some integer $r\ge 1$ is seen as evidence that the problem is unlikely
to be contained in \textsf{FPT}.
For more on parameterized complexity, including definitions of the complexity classes
$\textsf{W}[r]$, we refer to \cite{DF_99,Cygan_et_al/15}.

\subsection{Temporal exploration with strict temporal walks}
\label{subsec:prelim_strict}%
The relevant concepts and problem definitions for strict temporal walks are as follows.
We begin with the definition of a temporal graph:

\begin{definition}[Temporal graph]\label{def:tempgraph}
A temporal graph $\mathcal{G}$ with $underlying$ $graph$ $G = (V, E)$, lifetime $L$ and order $n$ is a sequence of simple undirected graphs $\mathcal{G} = \langle G_1,G_2,\ldots,G_L \rangle$ such that $|V| = n$ and $G_t = (V, E_t)$ (where $E_t \subseteq E$) for all $t \in [L]$. 
\end{definition}
For a temporal graph $\mathcal{G} = \langle G_1,\ldots,G_L\rangle$, the subscripts $t \in [L]$ indexing the graphs in the sequence are referred to as \textit{timesteps} (or \textit{steps}) and we call $G_t$ the $t$-th \emph{layer}. A tuple $(e,t)$ with $e \in E(G)$ is an \textit{edge-time pair} (or \textit{time edge\/}) of $\mathcal{G}$ if $e \in E_t$. Note that the size of any temporal graph (i.e., the maximum number of time edges) is  bounded by $O(Ln^2)$.
\begin{definition}[Strict temporal walk]\label{def:tempwalk}
	A strict temporal walk $W$ in $\mathcal{G}$ is a tuple	$W = (t_0, S)$,
 	consisting of a start time $t_0$ and an alternating sequence of vertices and edge-time pairs $S = \langle v_1,(e_1,t_1),v_2,(e_2,t_2),\ldots,v_{l-1},(e_{l-1},t_{l-1}),v_l\rangle$ such that 
 	$e_i = \{v_i,v_{i+1}\}$, $e_i \in G_{t_i}$ for $i\in [l-1]$ and $1 \leq t_0 \leq t_1 < t_2 < \cdots < t_{l-1} \leq L$. 
\end{definition}
We say that a temporal walk $W = (t_0, S)$ \textit{visits} any vertex that is included in~$S$. Further, $W$ \textit{traverses} edge $e_i$ at time $t_i$ for all $i \in [l-1]$ and is said to \textit{depart
 	from} (or start at) $v_1 \in V(\mathcal{G})$ at timestep $t_0$ and \textit{arrive at} (or finish at) 
 	$v_l \in V(\mathcal{G})$ at the end of timestep $t_{l-1}$ (or, equivalently, at the beginning of timestep $t_{l-1}+1$).
	Its \emph{arrival time} is defined to be $t_{l-1}+1$. It is assumed that $W$ is positioned at $v_1$ at 
 	the start of timestep $t_0 \in [t_1]$ and waits at $v_1$ until edge $e_1$ is traversed during timestep $t_1$.
 	The quantity $|W| = t_{l-1} - t_0 + 1$ is called the \emph{duration} of $W$.
Observe that the arrival time of a strict temporal walk
equals its start time plus its duration.
We remark that a walk with arrival time $t$ that finishes at a vertex~$v$
and a walk with start time $t$ (or later) that departs from~$v$ can
be combined into a single walk in the obvious way.

We denote by $sp(u,v,t)$ the duration of a shortest (i.e., having minimum arrival time) temporal walk in $\mathcal{G}$ that starts at $u \in V(\mathcal{G})$ in timestep $t$ and ends at $v \in V(\mathcal{G})$. If $u=v$, $sp(u,v,t)=0$. We note that there is no guarantee that a walk between a pair of vertices $u,v$ exists; in such cases we let $sp(u,v,t) = \infty$. The algorithms that we present in Section~\ref{sec:strict} will repeatedly require us to compute such shortest walks for specific pairs of vertices $u,v \in V(\mathcal{G})$ and a timestep $t \in [L]$ -- the following theorem allows us to do this:
\begin{theorem}[Wu et al.~\cite{WCHKLX_14}]\label{thm:shortpaths}
	Let $\mathcal{G} = \langle G_1,\ldots,G_L\rangle$ be an arbitrary temporal graph. Then, for any $u \in V(\mathcal{G})$ and $t \in [L]$, one can compute in $O(Ln^2)$ time for all $v \in V(\mathcal{G})$ 
	the value $sp(u,v,t)$. For any $v\in V(\mathcal{G})$ for which $sp(u,v,t)$ is finite,
	a temporal walk that starts at $u$ at time $t$, ends at $v$, and has duration $sp(u,v,t)$
	can then be determined in time proportional to the number of time-edges of that walk.
\end{theorem}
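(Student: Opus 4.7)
The plan is to prove the theorem by giving a temporal adaptation of breadth-first search that processes layers in chronological order, maintaining for each vertex $v$ the earliest arrival time achievable by a strict temporal walk that starts at $u$ at timestep $t$. I will then read off $sp(u,v,t)$ from these arrival times and describe how to reconstruct an actual walk using parent pointers.

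First I would set up an array $a[\,\cdot\,]$ indexed by $V(\mathcal{G})$ with $a[u]:=t$ and $a[v]:=\infty$ for $v\neq u$; intuitively, $a[v]$ stores the minimum arrival time of any strict temporal walk from $u$ found so far that ends at $v$. I would then iterate $t'$ from $t$ to $L$ and in each iteration scan the edges of layer $G_{t'}$ once. For each edge $e=\{x,y\}\in E_{t'}$, if $a[x]\le t'$ I schedule a tentative update $a'[y]:=\min(a'[y],\,t'+1)$, and symmetrically for $y$. Crucially, because strict walks require $t_i<t_{i+1}$, all updates generated from layer~$t'$ must use the values of $a[\,\cdot\,]$ as they were at the \emph{start} of that layer and may only be committed after the whole layer has been scanned. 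Implementing this with a small ``updates buffer'' that is flushed into $a[\,\cdot\,]$ at the end of the layer ensures correctness and costs only $O(n^2)$ per layer, for a total of $O(Ln^2)$. Finally, $sp(u,v,t)=a[v]-t$ if $a[v]<\infty$, and $sp(u,v,t)=\infty$ otherwise.

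Correctness I would argue by induction on $t'$: at the end of processing layer~$t'$, $a[v]$ equals the minimum arrival time at $v$ over all strict temporal walks from $u$ starting at time $t$ and using only time edges with timestamp $\le t'$. The base case $t'=t-1$ is immediate, and the inductive step follows because any optimal walk ending at $v$ with last edge $(e,t')$ has a prefix ending at the other endpoint $x$ with arrival time $\le t'$, which is captured in $a[x]$ by induction, and hence the tentative update for $e$ will set $a[v]$ to at most $t'+1$. The batched update rule is exactly what prevents two edges of $G_{t'}$ from being chained together, matching the strict-walk requirement.

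For walk reconstruction, whenever an update $a'[y]\gets t'+1$ is committed, I store a parent pointer $\pi(y)=(x,e,t')$ recording the time edge used to arrive at~$y$. To output a walk to a target $v$ with $a[v]<\infty$, I follow the parent pointers from $v$ back to $u$ and reverse the resulting sequence, yielding a valid strict temporal walk of duration $a[v]-t$; the total time is proportional to the number of time edges on that walk. The main subtle point, and the step I would emphasise most carefully, is the batching of per-layer updates that enforces the strict inequalities between consecutive edge times; all other components are routine extensions of standard BFS.
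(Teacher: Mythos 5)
Your argument is correct. Note that the paper does not prove this statement at all: it is imported verbatim from Wu et al.~\cite{WCHKLX_14} as a black box, so there is no in-paper proof to compare against. Your layered, time-expanded BFS with earliest-arrival labels $a[\cdot]$ and parent pointers is exactly the standard single-source earliest-arrival computation from that reference, and every step checks out (the induction on $t'$, the $O(n^2)$ work per layer, and the reconstruction in time proportional to the number of time edges). One minor remark: the batched ``updates buffer'' you emphasise is not actually needed for correctness, since any value committed during layer $t'$ equals $t'+1$ and therefore can never satisfy the firing condition $a[x]\le t'$ within that same layer; the strictness of the walk is already enforced by the test itself. Keeping the buffer is harmless, but it is not the load-bearing step you present it as.
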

The following two definitions will be used to describe the sets of candidate solutions for several of the
problems that we consider in this paper.
\begin{definition}[$(v,t,X)$-tour]\label{def:vtxtour}
A $(v,t,X)$-tour $W$ in a given temporal graph $\mathcal{G}$ is a strict temporal walk that
starts at some vertex $v \in V(\mathcal{G})$ in timestep $t$ and visits (at least) all vertices in 
$X \subseteq V(\mathcal{G})$. We can assume that the walk ends as soon as
all vertices in $X$ have been visited, so we take the arrival time $\alpha(W)$ of a $(v,t,X)$-tour $W$
to be the timestep after the timestep at the end of which $W$
has for the first time visited all vertices in~$X$.
\end{definition}
\begin{definition}[$(v,t,k)$-tour]\label{def:vtktour}
	A $(v,t,k)$-tour $W$ in a given temporal graph $\mathcal{G}$ is a $(v,t,X)$-tour
for some subset $X \subseteq V(\mathcal{G})$ that satisfies $|X| = k$. The arrival time
$\alpha(W)$ of a $(v,t,k)$-tour $W$ is the timestep after the timestep at the end of which $W$
has for the first time visited all vertices in~$X$.
\end{definition}
A $(v,t,X)$-tour $W$ ($(v,t,k)$-tour $W^*$) in a temporal graph $\mathcal{G}$ is said to be 
\textit{foremost} if $\alpha(W) \leq \alpha(W')$
($\alpha(W^*) \leq \alpha({W^*}')$) for any other $(v,t,X)$-tour $W'$ (any other $(v,t,k)$-tour ${W^*}'$).
We now formally define the main problems of interest:
For a given temporal graph $\mathcal{G}$ with start vertex $s\in V(\mathcal{G})$,
an $(s,1,V)$-tour is also called an \emph{exploration schedule}.
The standard temporal exploration problem is defined as follows:

\begin{definition}[\textsc{TEXP}]\label{def:texp}
An instance of \textsc{TEXP} is given as a tuple $(\mathcal{G},s)$, where $\mathcal{G}$ is an
arbitrary temporal graph with underlying graph $G=(V,E)$ and lifetime~$L$;
and $s$ is a start vertex in $V(\mathcal{G})$.
The problem then
asks that we decide if there exists an exploration schedule in $\mathcal{G}$.
\end{definition}

Instead of visiting all vertices, we may be interested in visiting all vertices
in a given set of $k$ vertices, or even an arbitrary set of $k$ vertices. These
problems are captured by the following two definitions.

\begin{definition}[\textsc{$k$-fixed TEXP}]\label{def:kfixedtexp}
	An instance of the \textsc{$k$-fixed TEXP} problem is given as a tuple 
$(\mathcal{G},s,X,k)$ where $\mathcal{G}=\langle G_1,\ldots,G_L\rangle$ is an
arbitrary temporal graph with underlying graph $G$ and lifetime $L$;
 $s$ is a start vertex in $V(\mathcal{G})$; and 
$X \subseteq V(\mathcal{G})$ is a set of target 
vertices such that $|X| = k$.
The problem then
asks that we decide if there exists an $(s,1,X)$-tour $W$ in $\mathcal{G}$.
\end{definition}
\begin{definition}[\textsc{$k$-arbitrary TEXP}]\label{def:karbtexp}
	An instance of the \textsc{$k$-arbitrary TEXP} problem is given as a tuple $(\mathcal{G},s,k)$ 
where $\mathcal{G}=\langle G_1,\ldots,G_L\rangle$ is an arbitrary 
temporal graph with underlying graph $G$ and lifetime $L$; $s$ is a start vertex in 
$V(\mathcal{G})$; and $k \in \mathbb{N}$. 
The problem then asks that we decide whether there exists an $(s,1,k)$-tour $W$ in $\mathcal{G}$.
\end{definition}

Finally, we may be given a family of subsets of the vertex set, and our goal may be
to visit at least one vertex in each subset. This leads to the following problem,
whose definition is analogous to the \textsc{Generalized TSP} problem~\cite{Noon/88}
(also known by various other names including \textsc{Set TEXP}, \textsc{Group TSP},
and \textsc{Multiple-Choice TSP}).

\begin{definition}[\textsc{Set TEXP}]
An instance of \textsc{Set TEXP} is given as a tuple $(\mathcal{G},s,\mathcal{X})$, 
where $\mathcal{G}$ is an arbitrary temporal graph with lifetime $L$, $s \in V(\mathcal{G})$ is a 
start vertex, and $\mathcal{X} = \{X_1,\dots,X_m\}$ is a set of subsets $X_i \subseteq V(\mathcal{G})$. 
The problem then asks whether or not there exists a set $X\subseteq V(\mathcal{G})$ and
an $(s,1,X)$-tour in $\mathcal{G}$ with $X\cap X_i\neq\emptyset$ for all $i\in [m]$.
\end{definition}

For yes-instances of all the problems defined above, a tour with
minimum arrival time (among all tours of the type sought) is called an \emph{optimal solution}.

\subsection{Temporal exploration with non-strict temporal walks}
\label{subsec:prelim_nonstrict}%
When we consider the non-strict version of \textsc{TEXP}, a walk
is allowed to traverse an unlimited number of edges in every timestep. As mentioned in the introduction, this changes the nature of the problem significantly. In particular, it means that a temporal walk positioned at a vertex $v$ in timestep $t$ is able to visit, during timestep $t$, any other vertex contained in the same connected component $C$ as $v$ and move to an arbitrary vertex $u \in C$, beginning timestep $t+1$ positioned at vertex $u$. As such, it is no longer necessary to know the edge structure of the input temporal graph during each timestep, and we can focus only on the connected components of each layer. This leads to the following definition:
\begin{definition}[Non-strict temporal graph, $\mathcal{G}$]\label{nstg}
	A non-strict temporal graph $\mathcal{G} = \langle G_1, \ldots, G_L\rangle$ with vertex set $V := V(\mathcal{G})$ and lifetime $L$ is an indexed sequence of partitions (layers) $G_t = \{C_{t,1}, \ldots, C_{t,\gamma_t}\}$ of $V$ for $t \in [L]$. For all $t \in [L]$, each $v \in V$ satisfies $v \in C_{t,j}$ for a unique $j \in [\gamma_t]$. The integer $\gamma_t$ denotes the number of components in layer $G_t$; clearly we have $\gamma_t \in [n]$.
\end{definition}
For a given non-strict temporal graph with lifetime $L$ and
$\gamma_t$ components per step for $t\in[L]$, we define
$\gamma=\max_{t\in[L]}\gamma_t$ to be the \emph{maximum number of
components per step}.
A non-strict temporal walk is defined as follows:
\begin{definition}[Non-strict temporal walk, $W$]\label{nstw}
	A non-strict temporal walk $W$ starting at vertex $v$ at time $t_1$ in a non-strict temporal graph $\mathcal{G} = \langle G_1, \ldots, G_L\rangle$ is a sequence $W = C_{t_1,j_1},C_{t_2,j_2},\ldots,C_{t_l,j_l}$ of components $C_{t_i,j_{i}}$ ($i \in [l]$) with $1\le t_1 \le t_l\le L$ such that: $t_i+1 = t_{i+1}$ for all $i \in [1,l-1]$; $C_{t_i, j_i} \in G_{t_i}$ and $j_{i} \in [\gamma_{t_i}]$ for all $i \in [l]$; $C_{t_i, j_{i}} \cap C_{t_{i+1},j_{i+1}} \neq \emptyset$ for all $i \in [l-1]$; and $v \in C_{t_1,j_1}$. Its
	arrival time is defined to be~$t_l$.
\end{definition} 
Let $W = C_{t_1,j_{1}},C_{t_2,j_{2}},\ldots,C_{t_l,j_{l}}$ be a non-strict temporal walk in some 
non-strict temporal graph $\mathcal{G}$ starting at some vertex $s \in C_{t_1,j_1}$. We refer to $l-1$ as the \textit{duration} of $W$. The walk $W$ is said to start at vertex $s \in C_{t_1,j_{1}}$ in timestep $t_1$ and finish at component $C_{t_l,j_{l}}$ (or sometimes at some $v \in C_{t_l,j_l}$) in timestep $t_l$. Furthermore, $W$ \textit{visits} the set of vertices $\bigcup_{i \in [l]} C_{t_i,j_{i}}$. Note that $W$ visits exactly one component in each of the $l$ timesteps from $t_1$ to $t_l$. We call $W$ \textit{non-strict exploration schedule starting at $s$} with \textit{arrival time} $l$ if $t_1 = 1$ and $\bigcup_{i \in [l]} C_{t_i,j_{i}} = V(\mathcal{G})$.
A non-strict temporal walk $W_1$ that finishes in component $C_{t,j}$
and a non-strict temporal walk $W_2$ that starts at a vertex $v$ in $C_{t,j}$
at time~$t$ can be combined into a single non-strict temporal walk in the obvious way.
This is why the arrival time of $W_1$ is defined to be $t$ rather than~$t+1$,
as one might have expected in analogy with the case of strict temporal walks.
Furthermore, note that the arrival time of a non-strict temporal walk
equals its start time plus its duration.

A \emph{non-strict $(v,t,X)$-tour} is a non-strict temporal walk that starts at $v$ at time $t$
and visits at least all vertices in~$X$.
A \emph{non-strict $(v,t,k)$-tour} is a non-strict $(v,t,X)$-tour for some $X\subseteq V$
with $|X|=k$.

The problems
\textsc{TEXP},
\textsc{$k$-fixed TEXP},
\textsc{$k$-arbitrary TEXP}, and
\textsc{Set TEXP} that have been defined for strict temporal walks
then translate into the corresponding problems for non-strict temporal walks,
which we call
\textsc{NS-TEXP},
\textsc{$k$-fixed NS-TEXP},
\textsc{$k$-arbitrary NS-TEXP}, and
\textsc{Set NS-TEXP}, respectively.

\section{Strict TEXP parameterizations}\label{sec:stricttexp}
\label{sec:strict}%
In this section, we consider temporal exploration problems
in the strict setting.
First, we observe that we cannot hope for
an \textsf{FPT} algorithm for \textsc{TEXP} for parameter~$\gamma$, the maximum number of connected
components per step, unless $\mathsf{P}=\mathsf{NP}$:
It was shown in \cite[Theorem 3.5]{EHK_21} that \textsc{TEXP} is \textsf{NP}-hard
even if the graph in each timestep is the same connected planar
graph of maximum degree~$3$, which implies the following:

\begin{observation}\label{obs:gammaOne}
\textsc{TEXP} is \textsf{NP}-hard even if $\gamma=1$.
\end{observation}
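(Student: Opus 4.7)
The plan is to derive this observation as an immediate consequence of the hardness result cited from~\cite{EHK_21}. Since the statement is essentially a repackaging of an existing theorem, the proof will be a one-step reduction-free argument rather than a new construction.

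First, I would recall the statement of \cite[Theorem 3.5]{EHK_21}: \textsc{TEXP} remains \textsf{NP}-hard on instances in which the graph $G_t$ in every timestep~$t$ coincides with a single fixed connected planar graph of maximum degree~$3$. Note that the statement makes no restriction on the lifetime $L$; the hardness is obtained on temporal graphs whose layers are all the same connected graph.

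Next, I would observe that on any such instance the quantity $\gamma_t$, defined as the number of connected components of $G_t$, equals~$1$ for every $t\in[L]$, simply because the underlying static graph in each layer is connected. Therefore $\gamma=\max_{t\in[L]}\gamma_t=1$, and the family of hard instances produced by the reduction of~\cite{EHK_21} lies entirely within the class of inputs satisfying $\gamma=1$. Hence \textsc{TEXP} restricted to this class is still \textsf{NP}-hard, which is exactly the observation.

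The only thing to be careful about is that the two papers may use slightly different conventions for temporal graphs (in particular, the non-strict model of Section~\ref{subsec:prelim_nonstrict} versus the strict one of Section~\ref{subsec:prelim_strict}); but since Observation~\ref{obs:gammaOne} is stated about \textsc{TEXP}, i.e.\ the strict setting, and~\cite{EHK_21} works in the same strict model, no translation is required. There is no genuine obstacle in this argument; it is essentially a restatement of a known result under the parameter name $\gamma$ introduced here, and its role in the paper is to justify why attention later shifts to the non-strict setting (where $\gamma\le 2$ turns out to be tractable by Theorem~\ref{thm:gammaTwo}).
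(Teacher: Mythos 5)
Your proposal is correct and matches the paper's own justification exactly: the paper derives Observation~\ref{obs:gammaOne} as an immediate consequence of \cite[Theorem 3.5]{EHK_21}, noting that the hard instances there have every layer equal to the same connected graph, so $\gamma=1$. Nothing further is needed.
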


In the remainder of this section, we first give an \textsf{FPT} algorithm
for \textsc{$k$-fixed TEXP} in Section~\ref{ss:kfixed}.
In Section~\ref{ss:karb}, we first give a randomized
\textsf{FPT} algorithm for \textsc{$k$-arbitrary TEXP}
and then show how to derandomize it.
In Section~\ref{ss:set}, we show that \textsc{Set TEXP}
is $\textsf{W}[2]$-hard for parameter~$L$.

\subsection{An \textup{\textsf{FPT}} algorithm for \texorpdfstring{\textsc{$k$-fixed TEXP}}{k-fixed TEXP}}\label{ss:kfixed}
In this section we provide a deterministic \textsf{FPT} algorithm for \textsc{$k$-fixed TEXP}.
Let $(\mathcal{G},s,X,k)$ be an instance of \textsc{$k$-fixed TEXP}.
For a given order $(v_1,v_2,\ldots,v_k)$ of $k$ vertices,
one can use Theorem~\ref{thm:shortpaths}
to check in polynomial time whether it is possible to visit the vertices
in that order: We find the earliest arrival time for reaching $v_1$
from~$s$, then the earliest arrival time for reaching $v_2$ from $v_1$
if we start at $v_1$ at the arrival time of the first walk,
and so on. In this way we obtain a walk that visits the vertices in
the given order, if one exists, and that walk has earliest arrival time
among all such walks.
Therefore, one approach to obtaining an \textsf{FPT} algorithm
for \textsc{$k$-fixed TEXP} would be to enumerate all $k!$ possible orders
in which to visit the $k$ vertices, and to determine for each order using
Theorem~\ref{thm:shortpaths} whether it is possible to visit the vertices
in that order. In the following, we design an \textsf{FPT} algorithm
for \textsc{$k$-fixed TEXP} whose running-time has a better
dependency on~$k$, namely, $2^kk$ instead of $k!$.

Our algorithm looks for an 
earliest arrival time $(s,1,X)$-tour of $\mathcal{G}$ via a dynamic programming (DP) approach. 
We note that the approach is essentially an adaptation of an algorithm proposed (independently by Bellman~\cite{Bellman_62} and Held \& Karp~\cite{HK_62}) for the classic Travelling Salesperson Problem to the parameterized problem for temporal graphs.
\begin{theorem}\label{thm:kfixedFPT}
	It is possible to decide any instance $I = (\mathcal{G},s,X,k)$ of \textsc{$k$-fixed TEXP}, and return an optimal solution if $I$ is a yes-instance, in time $O(2^kkLn^2)$, where $n = |V(\mathcal{G})|$ and $L$ is $\mathcal{G}$'s lifetime.
\end{theorem}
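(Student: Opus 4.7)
The plan is to adapt the Bellman--Held--Karp dynamic program for the classical TSP to strict temporal walks, using Theorem~\ref{thm:shortpaths} as the shortest-walk subroutine. For each non-empty $S \subseteq X$ and each $v \in S$, I would define $T[S, v]$ to be the earliest arrival time at $v$ among all strict temporal walks that start at $s$ at time $1$ and visit every vertex of $S$, ending at $v$ (with $T[S, v] = \infty$ if no such walk exists). The base case is $T[\{v\}, v] = 1 + sp(s, v, 1)$ for every $v \in X$, using the convention $sp(s, s, 1) = 0$. For $|S| \ge 2$ and $v \in S$ the recurrence would be
\[
T[S, v] \;=\; \min_{u \in S \setminus \{v\}} \bigl( T[S \setminus \{v\}, u] + sp(u, v, T[S \setminus \{v\}, u]) \bigr),
\]
where a term is read as $\infty$ whenever $T[S\setminus\{v\}, u] = \infty$ or exceeds~$L$. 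The algorithm answers \textbf{yes} iff $\min_{v \in X} T[X, v]$ is finite, and that value is the optimal arrival time.

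Correctness I would establish by induction on $|S|$, appealing to the concatenation remark immediately after Theorem~\ref{thm:shortpaths}: any optimal $(s,1,X)$-tour ending at $v$ can be decomposed at the last-visited vertex $u \in X\setminus\{v\}$ into an $(s,1,S\setminus\{v\})$-subtour ending at $u$ (which, by the inductive hypothesis, has arrival time at least $T[S\setminus\{v\}, u]$) followed by an earliest-arrival walk from $u$ to $v$ that departs at the subtour's arrival time. The latter piece is exactly what Theorem~\ref{thm:shortpaths} produces with duration $sp(u, v, T[S\setminus\{v\}, u])$, so the recurrence captures every optimal decomposition.

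For the running-time bound, I would process subsets $S$ in order of increasing cardinality, and for each pair $(S, u)$ with $u \in S$ and $T[S, u] < \infty$ make a single call to Theorem~\ref{thm:shortpaths} with start vertex $u$ and start time $T[S, u]$, obtaining $sp(u, v, T[S, u])$ for every $v \in V(\mathcal{G})$ in $O(Ln^2)$ time. These values then update the at most $k - |S|$ entries $T[S\cup\{v\}, v]$ with $v \in X \setminus S$ in $O(k)$ further work. Since $\sum_{S\subseteq X} |S| = k\cdot 2^{k-1}$, summing gives the claimed $O(2^k k L n^2)$ total. To return an actual optimal tour when the instance is \textbf{yes}, I would also store, for each filled entry $T[S, v]$, the predecessor $u$ attaining the minimum, then trace back from the optimal final vertex to~$s$ and expand each hop into the concrete shortest walk delivered by Theorem~\ref{thm:shortpaths}.

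The main thing to get right is the timing arithmetic at the joins between successive hops: I need that a walk arriving at $u$ at time $t$ and a walk departing from $u$ at time $t$ really concatenate into a valid strict temporal walk (which holds because, by the definition of arrival time in Section~\ref{subsec:prelim_strict}, all edges of the first piece have timestamp strictly less than $t$ while all edges of the second have timestamp at least $t$) and that the arrival time of the concatenation is the arrival time of the first piece plus the duration of the second. Once this bookkeeping is pinned down, both the induction for correctness and the subset-enumeration argument for the runtime are routine.
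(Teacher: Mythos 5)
Your proposal is correct and follows essentially the same route as the paper: the identical Bellman--Held--Karp-style table indexed by $(S,v)$ with the recurrence $\min_{u \in S\setminus\{v\}}\bigl(T[S\setminus\{v\},u] + sp(u,v,T[S\setminus\{v\},u])\bigr)$, correctness by induction on $|S|$ via decomposition at the last-visited vertex, one $O(Ln^2)$ invocation of Theorem~\ref{thm:shortpaths} per filled entry, and pointer-based traceback. The only (immaterial) difference is that you push updates forward to supersets rather than pulling from subsets, which yields the same $O(2^k k L n^2)$ bound.
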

\begin{proof}		
	First we describe our algorithm before proving its correctness and analysing its running time. We begin
	by specifying a dynamic programming formula for $F(S,v)$, by which we denote the minimum arrival time of any temporal walk in $\mathcal{G}$ that starts at vertex $s \in V(\mathcal{G})$ in timestep $1$, visits all vertices in $S \subseteq X$, and finishes at vertex $v \in S$. One can compute $F(S,v)$ via the following formula:
		\begin{eqnarray}\label{eqn:fsv}  
			F(S,v) = 
     			\begin{cases}
       				1+sp(s,v,1)		 &\; (|S| = 1)\\
       				\displaystyle \min_{u \in S-\{v\}} [ F(S-\{v\},u) + sp(u,v,F(S-\{v\},u))] &\; (|S| > 1)\\
     			\end{cases}
		\end{eqnarray}
		Note that to compute $F(S,v)$ when $|S|>1$, Equation (\ref{eqn:fsv}) states that we need only consider values $F(S',u)$ with $u \in S'$ and $|S'| = |S|-1$, and so we begin by computing all values $F(S',u)$ such that $S' \subseteq X$ satisfies $|S'| = 1$ and $u \in S'$, before computing all values such that $|S'| = 2$ and $u \in S'$ and so on, until we have computed all values $F(X,u)$ where $u \in X$ (i.e., values $F(S',u)$ with $|S'| = k = |X|$). Once all necessary values have been obtained, computing the following value gives the arrival time of an optimal $(s,1,X)$-tour: 
		\begin{eqnarray}\label{eqn:fstar}
			F^* = \min_{v \in X} F(X,v).
		\end{eqnarray} 
		If, whenever we compute a value $F(S,v)$ with $|S| > 1$, we also store alongside $F(S,v)$ a single pointer 
		\[p(S,v) = \arg\min_{u \in S-\{v\}} [ F(S-\{v\},u) + sp(u,v,F(S-\{v\},u))] ,\] 
		then once we have computed $F^*$ we can use a traceback procedure to reconstruct the walk with arrival time $F^*$. More specifically, let $u_1 = \arg\min_{u \in X} F(X,u)$ and $u_i = p(X-\{u_1,\ldots,u_{i-2}\},u_{i-1})$ for all $i \in [2,k]$. To complete the algorithm, we then check if $F^*$ is finite: If so, then there must be a $(s,1,X)$-tour $W$ in $\mathcal{G}$ with $\alpha(W) = F^*$ that visits the vertices $u_{k},\ldots,u_{1}$ in that order. We can reconstruct $W$ by concatenating the $k$ shortest walks obtained by starting at $s$ in timestep $1$ and computing a shortest walk from $s$ to $u_{k}$, then computing a shortest walk from $u_{k}$ to $u_{k-1}$ starting at the timestep at which $u_{k}$ was reached, and so on, until $u_1$ is reached; once constructed, return $W$. If, on the other hand, $F^* = \infty$ (which is possible by the definition of $sp(u,v,t)$) then return no.
		
\paragraph{Correctness}
The correctness of Equation (\ref{eqn:fsv}) can be shown via induction on $|S|$: The base case (i.e., when $|S| = 1$) is correct since the arrival time of the foremost temporal walk that starts at $s$ in timestep $1$ and ends at a specific vertex $v \in X$ is clearly equal to one plus the duration of the foremost temporal walk between $s$ and $v$ starting at timestep $1$. 
		
		For the general case (when $|S| > 1$), assume first that the formula holds for any set $S'$ such that $|S'| = l$ and any vertex $u \in S'$. To see that the formula holds for all sets $S$ with $|S| = l+1$ and vertices $v \in S$, consider any walk $W$ that starts in timestep 1, visits all vertices in some set $S$ with $|S| = l+1$ and ends at $v$. Let $x_1,\ldots,x_{l+1}$ be the order in which the vertices $x_i \in S$ are reached by $W$ for the first time; let $x=x_{l+1}=v$ and $x'=x_{l}$. Note that the subwalk $W'$ of $W$ that begins in timestep 1 and finishes at the end of the timestep in which $W$ arrives at $x'$ for the first time is surely an $(s,1,S-\{v\})$-tour, since $W'$ visits every vertex in $S-\{x\} = S-\{v\}$. Then, by the induction hypothesis we have $\alpha(W') \geq F(S-\{v\},x')$ because $|S-\{v\}| = l$, and since $W$ ends at $v$ we have
		\begin{eqnarray*}
			\alpha(W) & \geq & \alpha(W') + sp(x',v,\alpha(W'))  \\
				      & \geq & F(S-\{v\},x') + sp(x',v,F(S-\{v\},x')).
		\end{eqnarray*}
More generally, we can say that any $(s,1,S)$-tour $W$ that starts at $s$ in timestep 1, visits all vertices in $S$ (where $|S| = l+1$), and finishes at $v \in S$ satisfies the above inequality for some $x' \in S-\{v\}$. Note that for any $u \in S-\{v\}$, $F(S-\{v\},u) + sp(u,v,F(S-\{v\},u))$ corresponds to the arrival time of a valid $(s,1,S)$-tour, obtained by concatenating an earliest arrival time $(s,1,S-\{v\})$-tour that ends at $u$ and a shortest walk between $u$ and $v$ starting at time $F(S-\{v\},u)$. Therefore, to compute $F(S,v)$ it suffices to compute the minimum value of $F(S-\{v\}, u) + sp(u,v,F(S-\{v\},u))$ over all $u \in S-\{v\}$; note that this is exactly Equation (\ref{eqn:fsv}) in the case that $|S|>1$.

To establish the correctness of Equation (\ref{eqn:fstar}) recall that, by Definition \ref{def:vtxtour}, the arrival time of any $(s,1,X)$-tour in $\mathcal{G}$ is equal to the timestep after the timestep in which it traverses a time edge to reach the final unvisited vertex of $X$ for the first time. Assume that $I$ is a yes-instance and let $x^* \in X$ be the $k$-th unique vertex in $X$ that is visited by some foremost $(s,1,X)$-tour $W$; then, by the analysis in the previous paragraph, we must have $\alpha(W) = F(X,x^*)$ since $W$ is foremost, so $x^* = \arg\min_{v \in X} F(X,v)$ and thus $\alpha(W) = F(X,x^*) = \min_{v \in X} F(X,v) = F^*$, as required.

The fact that the answer returned by the algorithm is correct follows from the correctness of Equations (\ref{eqn:fsv}) and (\ref{eqn:fstar}) and the traceback procedure, together with the fact that $I$ is a no-instance if and only if $F^* = \infty$. The details of this second claim are not difficult to see and are omitted,
but we note that it is indeed possible that $F^*=\infty$ since $F^*$ is the summation of a number of values $sp(u,v,t)$, some of which may satisfy $sp(u,v,t)=\infty$ by definition.

\paragraph{Runtime analysis} Since we only compute values of $F(S,v)$ such that $v \in S$ and $1 \leq |S| \leq k$, in total we compute $O(\sum_{i=1}^k{k \choose i}i) = O(2^kk)$ values. Note that, to compute any value $F(S,v)$ with $|S| = i > 1$, Equation (\ref{eqn:fsv}) requires that we consider the values $F(S-\{v\},u) + sp(u,v,F(S-\{v\},u))$ with $u \in S-\{v\}$, of which there are exactly $i-1$.
We therefore use Theorem \ref{thm:shortpaths} to compute (and store temporarily), for each $S'$ with $|S'| = i-1$ and $x \in S'$, in $O(Ln^2)$ time the value of $sp(x,y,F(S',x))$ for all $y \in V(\mathcal{G})$ immediately after computing all $F(S',x)$, and use these precomputed shortest walk durations to compute $F(S,v)$ for any $S$ with $|S| = i$ and $v \in S$ in time $O(i) = O(k)$.
Thus, we spend $O(k) + O(Ln^2)  = O(Ln^2)$ (since $k \leq n$) time for each of $O(2^kk)$ values $F(S,v)$. This yields an overall time of $O(2^kkLn^2)$. Note that $F^*$ can be computed using Equation (\ref{eqn:fstar}) in $O(k)$ time since we take the minimum of $O(k)$ values; also note that a $(v,1,X)$-tour with arrival time $F^*$ can be reconstructed in time $O(kLn^2)$ using the aforedescribed traceback procedure, since we need to recompute $O(k)$ shortest walks, spending $O(Ln^2)$ time on each walk. Hence the overall running time of the algorithm is bounded by $O(2^kkLn^2)$, as claimed.
	\end{proof}

We remark that \textsc{$k$-fixed TEXP} is also in \textsf{FPT} when parameterized by the
lifetime~$L$: If $L<k-1$, the instance is clearly a no-instance, and if $L\ge k-1$,
the \textsf{FPT} algorithm for \textsc{$k$-fixed TEXP}
with parameter $k$ is also an \textsf{FPT} algorithm for parameter~$L$.

As \textsc{$k$-fixed TEXP} becomes \textsc{TEXP} when $X=V(\mathcal{G})$, we
get the following corollary.

\begin{corollary}
\label{cor:texplFPT}%
\textsc{TEXP} is in \textsf{FPT} when parameterized by the number of vertices~$n$ or by
the lifetime~$L$.
\end{corollary}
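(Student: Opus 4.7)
The plan is to derive the corollary as a direct consequence of Theorem~\ref{thm:kfixedFPT} together with the remark preceding it. The starting point is the observation that \textsc{TEXP} is precisely the special case of \textsc{$k$-fixed TEXP} obtained by setting $X = V(\mathcal{G})$, so $k = n$. Substituting into the $O(2^k k L n^2)$ runtime of Theorem~\ref{thm:kfixedFPT} yields $O(2^n n^3 L)$, which is of the form $f(n)\cdot \mathrm{poly}(|I|)$ and hence establishes membership in \textsf{FPT} for parameter~$n$.

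For the lifetime parameter~$L$, the key structural observation is that a strict temporal walk in $\mathcal{G}$ has at most $L$ edges (since timestamps must be strictly increasing and lie in $[L]$), and therefore visits at most $L+1$ distinct vertices. Consequently, if $L < n - 1$, the given instance is immediately a no-instance and can be rejected in time linear in the input size. Otherwise $n \le L+1$, so $n$ is bounded by a function of the parameter, and invoking the algorithm of Theorem~\ref{thm:kfixedFPT} with $X = V(\mathcal{G})$ now runs in time $O(2^{L+1}(L+1)^3 L) = O(2^L L^4)$ in terms of the parameter (times a polynomial in the input size), which is again \textsf{FPT}.

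There is no real obstacle here; the argument is essentially bookkeeping. The only point worth being careful about is that the bound ``strict temporal walk visits at most $L+1$ vertices'' relies specifically on the strict setting (timestamps strictly increase, so at most $L$ edges can be traversed), which is why the analogous reduction does not go through for non-strict walks and is therefore handled separately elsewhere in the paper.
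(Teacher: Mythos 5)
Your argument is correct and matches the paper's own reasoning exactly: the paper derives the corollary by setting $X=V(\mathcal{G})$ in Theorem~\ref{thm:kfixedFPT} and, for parameter~$L$, uses the same remark that an instance with $L<k-1$ (here $k=n$) is trivially a no-instance while otherwise $n\le L+1$ bounds $n$ by the parameter. Nothing further is needed.
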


\subsection{\textup{\textsf{FPT}} algorithms for \texorpdfstring{\textsc{$k$-arbitrary TEXP}}{k-arbitrary TEXP}}\label{ss:karb}
The main result of this section is a randomized \textsf{FPT} algorithm for \textsc{$k$-arbitrary TEXP} that utilises the \textit{colour-coding} technique originally presented by Alon et al.~\cite{AYZ_95}. There, they employed the technique primarily to detect the existence of a $k$-vertex simple path in a given undirected graph $G$. More generally, it has proven useful as a technique for finding fixed motifs (i.e., prespecified subgraphs) in static graphs/networks. We provide a high-level description of the technique and the way that we apply it at the beginning of Section \ref{sec:karbrand}. A standard derandomization technique (originating from~\cite{AYZ_95,NSS_95}) is then utilised within Section~\ref{sec:karbdet} to obtain a deterministic algorithm for \textsc{$k$-arbitrary TEXP} with a worse, but still \textsf{FPT}, running time.

\subsubsection{A randomized algorithm}\label{sec:karbrand} 
The algorithm of this section employs the colour-coding technique of Alon et al.~\cite{AYZ_95}. First, we informally sketch the structure of the algorithm behind Theorem \ref{thm:karbFPT}: We colour the vertices of an input temporal graph uniformly at random, then by means of a DP subroutine we look for a temporal walk that begins at some start vertex $s$ in timestep $1$ and visits $k$ vertices with distinct colours by the earliest time possible. Notice that if such a walk is found then it must be a $(v,t,k)$-tour, since the $k$ vertices are distinctly coloured and therefore must be distinct. Then, the idea is to repeatedly (1) randomly colour the input graph $\mathcal{G}$'s vertices; then (2) run the DP subroutine on each coloured version of $\mathcal{G}$. We repeat these steps enough times to ensure that, with high probability, the vertices of an optimal $(s,1,k)$-tour are coloured with distinct colours at least once over all colourings -- if this happens then the DP subroutine will surely return an optimal $(s,1,k)$-tour. With this high-level description in mind, we now present/analyse the algorithm:
\begin{theorem}\label{thm:karbFPT}
	For every $\varepsilon > 0$, there exists a Monte Carlo algorithm that, with probability $1-\varepsilon$, decides a given instance $I = (\mathcal{G},s,k)$ of \textsc{$k$-arbitrary TEXP}, and returns an optimal solution if $I$ is a yes-instance, in time $O((2e)^kLn^3\log \frac{1}{\varepsilon})$, where $n = |V(\mathcal{G})|$ and $L$ is $\mathcal{G}'s$ lifetime.
\end{theorem}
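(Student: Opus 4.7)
My plan is to combine the colour-coding template of Alon, Yuster and Zwick with a dynamic-programming subroutine analogous to the one behind Theorem~\ref{thm:kfixedFPT}, but indexed by subsets of the colour palette $[k]$ rather than subsets of a pre-specified target set. In each trial I colour every $v\in V(\mathcal{G})$ with $c(v)\in[k]$ independently and uniformly at random, and then run the DP described below to look for the earliest-arrival strict temporal walk from~$s$ that visits vertices of $k$ distinct colours; since distinct colours force distinct vertices, any such walk is automatically an $(s,1,k)$-tour.

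For the DP, let $T(C,v)$ denote the minimum arrival time of a strict temporal walk that starts at~$s$ in timestep~$1$, ends at~$v$, and whose set of visited colours contains $C\subseteq[k]$, subject to the additional requirement $c(v)\in C$. The base case is $T(\{c(v)\},v)=1+sp(s,v,1)$, while for $|C|\ge 2$ I would use the recurrence
\[
T(C,v)=\min_{\substack{u\in V(\mathcal{G})\\ c(u)\in C\setminus\{c(v)\}}} \bigl[\, T(C\setminus\{c(v)\},u)+sp(u,v,T(C\setminus\{c(v)\},u))\,\bigr],
\]
with the $sp$-values supplied by Theorem~\ref{thm:shortpaths}. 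After filling the table I output $F^*=\min_{v\in V(\mathcal{G})} T([k],v)$ and, when it is finite, reconstruct the corresponding walk from backpointers exactly as in the proof of Theorem~\ref{thm:kfixedFPT}. Correctness is by induction on~$|C|$: any witness walk for $T(C,v)$ can be cut at the vertex~$u$ where the last element of $C\setminus\{c(v)\}$ is first contributed to the seen-colour set; the prefix witnesses $T(C\setminus\{c(v)\},u)$, and the remaining segment from~$u$ to~$v$ is no shorter than $sp(u,v,T(C\setminus\{c(v)\},u))$. Intermediate vertices of the $sp$-segments may carry arbitrary colours, but since $C$ is only required to be \emph{contained in} the seen-colour set, any additional colours picked up along the way cause no harm.

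For the running time, I process states in order of increasing~$|C|$ and, as soon as all values $T(C',u)$ with $|C'|=i-1$ are available, invoke Theorem~\ref{thm:shortpaths} once per $u$ to compute $sp(u,\cdot,T(C',u))$ in $O(Ln^2)$ time; this gives an $O(2^k n\cdot Ln^2)=O(2^k L n^3)$ bound per trial. For the amplification, fix any $k$ vertices visited by an optimal $(s,1,k)$-tour (assuming $I$ is a yes-instance); the probability that these vertices receive pairwise distinct colours is $k!/k^k\ge e^{-k}$, so a single trial succeeds with probability at least~$e^{-k}$, and $N=\lceil e^k\ln(1/\varepsilon)\rceil$ independent trials suffice to push the overall success probability to at least~$1-\varepsilon$. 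Multiplying the per-trial cost by~$N$ yields the stated $O((2e)^k L n^3\log\frac{1}{\varepsilon})$ bound. The main obstacle I anticipate is verifying that the recurrence is loss-free despite the fact that a temporal walk may revisit colours arbitrarily often and that the $sp$-based segments may introduce uncontrolled side colours; the resolution is to index the table by \emph{which} colours have been covered so far rather than by any fixed vertex set, after which the rest of the argument is a routine adaptation of Theorem~\ref{thm:kfixedFPT} together with a standard probability-amplification step.
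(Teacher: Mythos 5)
Your proposal is correct and follows essentially the same route as the paper's proof: random $k$-colouring, a dynamic program indexed by colour subsets $D\subseteq[k]$ and end vertex $v$ with $c(v)\in D$ (your $T(C,v)$ is the paper's $H(D,v)$, with the same recurrence, base case, and $sp$-precomputation trick), and amplification over $e^k\ln(1/\varepsilon)$ independent colourings using $k!/k^k>e^{-k}$. The subtlety you flag about side colours picked up along $sp$-segments is handled the same way in the paper (colourful walks are only required to \emph{cover} $D$), so there is nothing further to add.
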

\begin{proof}
	Let $V := V(\mathcal{G})$. We now describe our algorithm before proving it correct and analysing its running time. Let $c : V \to [k]$ be a colouring of the vertices $v \in V$. Let a walk $W$ in $\mathcal{G}$ that starts at $s$ and visits a vertex coloured with each colour in $D \subseteq [k]$ be known as a \textit{$D$-colourful walk}; let the timestep after the timestep at the end of which $W$
has for the first time visited vertices with $k$ distinct colours
be known as the \textit{arrival time} of~$W$, denoted by~$\alpha(W)$. The algorithm employs a subroutine that computes, should one exist, a $[k]$-colourful walk $W$ in $\mathcal{G}$ with earliest arrival time. Note that a $D$-colourful walk ($D \subseteq [k]$) in $\mathcal{G}$ is by definition an $(s,1,|D|)$-tour in $\mathcal{G}$. 
	
	Define $H(D,v)$ to be the earliest arrival time of any $D$-colourful walk (where $D \subseteq [k]$) in $\mathcal{G}$  that ends at a vertex $v$ with $c(v) \in D$.
 The value of $H(D,v)$ for any $D \subseteq [k]$ and $v$ with $c(v) \in D$ can be computed via the following dynamic programming formula (within the formula we denote by $D^-_{c(v)}$ the set $D - \{c(v)\}$):
	\begin{eqnarray}\label{eqn:hdv}
	H(D,v) = 
     			\begin{cases}
       				1+sp(s,v,1)		 &\; (|D| = 1)\\
       				\displaystyle \min_{u \in V: c(u) \in D^-_{c(v)}} [ H(D^-_{c(v)},u) + sp(u,v,H(D^-_{c(v)},u))] &\; (|D| > 1)\\
     			\end{cases}
	\end{eqnarray}
	In order to compute $H(D,v)$ for any $D \subseteq [k]$ and vertex $v$ with $c(v) \in D$, Equation (\ref{eqn:hdv}) requires that we consider values $H(D-\{c(v)\},u)$ such that $c(u) \in D-\{c(v)\}$, and so we begin by computing $H(D',v)$ for all $D'$ with $|D'| = 1$ and $v$ with $c(v) \in D'$, then for all $D'$ with $|D'| = 2$ and $v$ with $c(v) \in D'$, and so on, until all values $H([k],v)$ have been obtained. The earliest arrival time of any $[k]$-colourful walk in $\mathcal{G}$ is then given by
	\begin{eqnarray}\label{eqn:hdstar}
		H^* = \min_{u \in V(\mathcal{G})} H([k],u).
	\end{eqnarray}
	Once $H^*$ has been computed, we check whether its value is finite or equal to $\infty$. If $H^*$ is finite then we can use a pointer system and traceback procedure (almost identical to those used in the proof of Theorem \ref{thm:kfixedFPT}) to reconstruct an $(s,1,k)$-tour with arrival time $H^*$ if one exists; otherwise we return no. This concludes the description of the dynamic programming subroutine.
	
	Let $r = \lceil\frac{1}{\varepsilon}\rceil$ and let $W^*$ initially be the trivial walk that starts and finishes at vertex $s$ in timestep 1. Perform the following two steps for $e^k\ln r$ iterations:
	 
	\begin{enumerate} 
		\item Assign colours in $[k]$ to the vertices of $V$ uniformly at random and check if all $k$ colours colour at least one vertex of $G$; if not, start next iteration. If yes, proceed to step \ref{alg:randalg2}.\vspace{5pt}\label{alg:randalg1}
		\item Run the DP subroutine in order to find an optimal $[k]$-colourful walk $W$ in $\mathcal{G}$ if one exists. If such a $W$ is found then check if $\alpha(W) < \alpha(W^*)$ or $W^*$ starts and ends at $s$ in timestep $1$ (i.e., still has its initial value), and in either case set $W^* = W$; otherwise the DP subroutine returned no and we make no change to $W^*$.\label{alg:randalg2}
	\end{enumerate}
	
	Once all iterations of the above steps are over, check if $W^*$ is still equal to the walk that starts and finishes at $s$ in timestep $1$; if not then return $W^*$, otherwise return no. This concludes the algorithm's description. 	

	\paragraph{Correctness} We focus on proving the randomized aspect of the algorithm correct and omit correctness proofs for Equations (\ref{eqn:hdv}) and (\ref{eqn:hdstar}) since the arguments are similar to those provided in Theorem \ref{thm:kfixedFPT}'s proof.

	If $I$ is a no-instance then in no iteration will the DP subroutine find an $(s,1,k)$-tour in $\mathcal{G}$. Hence in the final step the algorithm will find that $W^*$ is equal to the walk that starts and ends at $s$ in timestep $1$ (by the correctness of Equations (\ref{eqn:hdv}) and (\ref{eqn:hdstar})) and return no, which is clearly correct. Assume then that $I$ is yes-instance. Let $W$ be an $(s,1,k)$-tour in $\mathcal{G}$ with earliest arrival time, and let $X \subseteq V$ be the set of $k$ vertices visited by $W$. Then, if during one of the $e^k\ln r$ iterations of steps \ref{alg:randalg1} and \ref{alg:randalg2} we colour the vertices of $V$ in such a way that $X$ is well-coloured (we say that a set of vertices $U \subseteq V$ is \textit{well-coloured} by colouring $c$ if $c(u) \neq c(v)$ for every pair of vertices $u,v \in U$), 
$W$ will induce an optimal $[k]$-colourful walk in $\mathcal{G}$. The DP subroutine will then return $W$ or some other optimal $[k]$-colourful walk $W'$ with $\alpha(W) = \alpha(W')$ that visits a well-coloured subset of vertices $X'$; note that the arrival time of the best tour found in any iteration so far will then surely be $\alpha(W)$, since $W$ has earliest arrival time. 
	
	Observe that if we colour the vertices of $V$ with $k$ colours uniformly at random, then, since $|X| = k$, there are $k^k$ ways to colour the vertices in $X \subseteq V$, of which $k!$ constitute well-colourings of $X$. Hence after a single colouring of $V$ we have
	\[\Pr[X \text{ is well-coloured}] = \frac{k!}{k^k} > \frac{1}{e^k},\]
where the inequality follows from the fact that $k!/k^k > \sqrt{2\pi}k^\frac{1}{2}e^\frac{1}{12k+1}/e^k$ (this inequality is due to Robbins~\cite{Robbins_55} and is related to Stirling's formula). Hence, after $e^k \ln r$ colourings, we have (using the standard inequality $(1-\frac{1}{x})^x \le \frac{1}{e}$ for all $x\ge 1$):
	\[\Pr[X \text{ is not well-coloured in any colouring}] \leq  \left(1-\frac{1}{e^k}\right)^{e^k\ln r} \leq 1/r \leq \varepsilon.\]
Thus, the probability that $X$
is well-coloured at least once after $e^k\ln r$ colourings is at least $1-\varepsilon$. It follows that, with probability $\geq 1-\varepsilon$, the earliest arrival $[k]$-colourful walk returned by the algorithm after all iterations is in fact an optimal $(s,1,k)$-tour in $\mathcal{G}$, since either $W$ or some other $(s,1,k)$-tour with equal arrival time will eventually be returned.
	
	\paragraph{Runtime analysis} Note that the DP subroutine computes exactly the values $H(D,v)$ such that $D \subseteq [k]$ and $v$ satisfies $c(v) \in D$. Hence there are at most ${k \choose i}n$ values $H(D,v)$ such that $|D| = i$, for all $i \in [k]$; this gives a total of $\sum_{i \in [k]} {k \choose i}n = O(2^kn)$ values. In order to compute $H(D,v)$ for any $D$ with $|D| = i > 1$, Equation~(\ref{eqn:hdv}) requires us to consider the value of $H(D-\{c(v)\},u) + sp(u,v,H(D-\{c(v)\},u))$ for all $u$ such that $c(u) \in D-\{c(v)\}$.
Therefore, similar to the algorithm in the proof of Theorem~\ref{thm:kfixedFPT}, we compute and store, immediately after computing each value $H(D',x)$ with $|D'| = i-1$ and $c(x) \in D'$, the value of $sp(x,y,H(D',x))$ for all $y\in V(\mathcal{G})$ in $O(Ln^2)$ time (Theorem~\ref{thm:shortpaths}).
Note that there can be at most $n$ vertices $u$ such that $c(u) \in D-\{c(v)\}$, and so in total we spend $O(n) + O(Ln^2) = O(Ln^2)$ time on each of $O(2^kn)$ values of $H(D,v)$, giving an overall time of $O(2^kLn^3)$. We can compute $H^*$ in $O(n)$ time since we take the minimum of $O(n)$ values, and the traceback procedure can be performed in $O(kLn^2)=O(Ln^3)$ time since we concatenate $k$ walks obtained using Theorem~\ref{thm:shortpaths}. Thus the overall time spent carrying out one execution of the DP subroutine is $O(2^kLn^3)$.

	Since the running time of each iteration of the main algorithm is dominated by the running time of the DP subroutine and there are $e^k\ln r = O(e^k \log \frac{1}{\varepsilon})$ iterations in total, we conclude that the overall running time of the algorithm is $O((2e)^kLn^3\log \frac{1}{\varepsilon})$, as claimed. This completes the proof.
\end{proof}

\subsubsection{Derandomizing the algorithm of Theorem \ref{thm:karbFPT}}\label{sec:karbdet}
The randomized colour-coding algorithm of Theorem \ref{thm:karbFPT} can be derandomized at the expense of incurring a $k^{O(\log k)}\log n$ factor in the running time. We employ a standard derandomization technique, presented initially in~\cite{AYZ_95}, which involves the enumeration of a \textit{$k$-perfect family of hash functions} from $[n]$ to $[k]$. The functions in such a family will be viewed as colourings of the vertex set of the temporal graph given as input to the \textsc{$k$-arbitrary TEXP} problem. 

Formally, a family $\mathcal{H}$ of hash functions from $[n]$ to $[k]$ is \textit{$k$-perfect} if, for every subset $S \subseteq [n]$ with $|S| = k$, there exists a function $f \in \mathcal{H}$ such that $f$ restricted to $S$ is bijective (i.e., one-to-one). The following theorem of Naor et al.\ \cite{NSS_95} enables one to construct such a family $\mathcal{H}$ in time linear in the size of $\mathcal{H}$:
\begin{theorem}[Naor, Schulman and Srinivasan~\cite{NSS_95}]\label{thm:kperfhf}
	A $k$-perfect family $\mathcal{H}$ of hash functions $f_i$ from $[n]$ to $[k]$, with size $e^kk^{O(\log k)}\log n$, can be computed in $e^kk^{O(\log k)}\log n$ time.
\end{theorem}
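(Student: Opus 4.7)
The plan is to establish the bound in two stages: first obtain an optimal \emph{existential} bound via a probabilistic argument, then derandomize it with an explicit construction that composes two hash families over universes of different sizes.

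First I would argue non-constructively that a $k$-perfect family of size $O(e^k k \log n)$ exists. For a uniformly random $f : [n] \to [k]$ and any fixed $k$-subset $S \subseteq [n]$, the probability that $f$ is injective on $S$ is $k!/k^k \ge e^{-k}$ (up to lower-order factors, by Stirling). Hence a family of $t = \lceil e^k (k\ln n + 1)\rceil$ independent uniform functions fails to be $k$-perfect on any given $S$ with probability at most $(1-e^{-k})^t \le e^{-k\ln n - 1}$, and by a union bound over the $\binom{n}{k} \le n^k$ subsets the failure probability is strictly less than~$1$. This pins down the target size $e^k \cdot \mathrm{poly}(k) \cdot \log n$ but gives no construction.

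To derandomize, I would compose two families. For the first stage, build a family $\mathcal{H}_1$ of hash functions from $[n]$ to a small universe $[m]$ with $m = \mathrm{poly}(k)$ such that every $k$-subset $S \subseteq [n]$ is mapped injectively to $[m]$ by at least one $h_1 \in \mathcal{H}_1$. Such an $\mathcal{H}_1$ of size $O(\log n)$ (for $m = \Theta(k^2)$) is obtainable from standard pairwise-independent / almost-$k$-wise-independent sample spaces, whose elements can be enumerated in time linear in $|\mathcal{H}_1|$. For the second stage, build a $k$-perfect family $\mathcal{H}_2$ from $[m]$ to $[k]$ of size $e^k k^{O(\log k)}$; on this much smaller universe, an explicit algebraic construction based on error-correcting codes (e.g.\ a Reed--Solomon based splitter, combined with an exhaustive search over a short codeword length chosen as $\Theta(\log k)$) achieves the desired bound and can be written out in time proportional to its size. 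The output family is $\mathcal{H} = \{\,h_2 \circ h_1 : h_1 \in \mathcal{H}_1,\ h_2 \in \mathcal{H}_2\,\}$. Its size is $e^k k^{O(\log k)} \log n$, and for any $k$-subset $S$ we pick $h_1 \in \mathcal{H}_1$ injective on $S$, note that $h_1(S)$ is a $k$-subset of $[m]$, and then choose $h_2 \in \mathcal{H}_2$ injective on $h_1(S)$; the composition is injective on $S$.

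The main obstacle is the second stage: achieving the $k^{O(\log k)}$ dependence on the small universe $[m]$ requires more than a naive probabilistic or $k$-wise independent construction (which would give $k^{\Omega(k)}$). The trick is to use a near-perfect family of functions whose \emph{code} interpretation has minimum distance large enough that each $k$-subset is split in a balanced way with only $k^{O(\log k)}$ candidates to try, after which a recursion of depth $O(\log k)$ reduces to the trivial base case. I would handle the time analysis by observing that both $\mathcal{H}_1$ and $\mathcal{H}_2$ can be enumerated with only constant overhead per element, so the total construction time matches $|\mathcal{H}_1| \cdot |\mathcal{H}_2| = e^k k^{O(\log k)} \log n$, as required.
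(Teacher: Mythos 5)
The paper offers no proof of this statement at all: it is Theorem~\ref{thm:kperfhf}, imported verbatim as a black box from Naor, Schulman and Srinivasan~\cite{NSS_95} and used only to derandomize the colour-coding algorithm of Theorem~\ref{thm:karbFPT}. So your proposal cannot be compared against an in-paper argument; what you have written is a sketch of the external construction itself. As such a sketch it is faithful to the known proof strategy: the probabilistic calculation correctly identifies $e^k\cdot\mathrm{poly}(k)\cdot\log n$ as the target, and the two-stage composition --- first a universe-reduction family $\mathcal{H}_1:[n]\to[\mathrm{poly}(k)]$ injective on every $k$-set, then a $k$-perfect family $\mathcal{H}_2$ on the small universe, combined as $h_2\circ h_1$ --- is exactly how the $e^kk^{O(\log k)}\log n$ bound is obtained (your claim that $|\mathcal{H}_1|=O(\log n)$ is slightly optimistic; the standard constructions give $k^{O(1)}\log n$, but that factor is absorbed into $k^{O(\log k)}$). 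The one place where your argument is an assertion rather than a proof is precisely the crux: the existence of an explicit family on $[\mathrm{poly}(k)]$ of size $e^kk^{O(\log k)}$ rests on the splitter machinery of~\cite{NSS_95} (splitting each $k$-set into $O(k/\log k)$ balanced blocks and recursing/brute-forcing on blocks of size $O(\log k)$), and you name the ingredients without carrying out the recursion or the size bookkeeping that produces the $e^k$ and $k^{O(\log k)}$ factors. Two further points worth making explicit if you were to write this out: the stated construction time $e^kk^{O(\log k)}\log n$ (with no factor of $n$) only makes sense if each $f_i$ is output as an $O(1)$-size seed evaluable in $O(1)$ time, which is exactly the convention the paper adopts in the remark following the theorem; and your existential bound $O(e^kk\log n)$ is strictly better than what the explicit construction achieves, so it serves only as motivation, not as a step of the proof.
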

We note that the value of $f_i(x)$ for any $f_i \in \mathcal{H}$ and $x \in [n]$ can be evaluated in $O(1)$ time.

To solve an instance of $\textsc{$k$-arbitrary TEXP}$, we can now use the algorithm from the proof of
Theorem~\ref{thm:karbFPT}, but instead of iterating over $e^k\ln r$ random colourings, we iterate over
the $e^kk^{O(\log k)}\log n$ hash functions in the $k$-perfect family of hash functions constructed
using Theorem~\ref{thm:kperfhf}. This ensures that the set $X$ of $k$ vertices visited by an optimal
$(s,1,k)$-tour is well-coloured in at least one iteration, and we obtain the following theorem.
	\begin{theorem}\label{thm:karbFPTdet}
		There is a deterministic algorithm that can solve a given instance $(\mathcal{G},s,k)$ of $\textsc{$k$-arbitrary TEXP}$ in $(2e)^kk^{O(\log k)}Ln^3\log n$ time, where $n = |V(\mathcal{G})|$.
		If the instance is a yes-instance, the algorithm also returns an optimal solution.
	\end{theorem}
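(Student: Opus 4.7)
The plan is to derandomize the Monte Carlo procedure of Theorem~\ref{thm:karbFPT} by replacing its random colourings with the deterministic enumeration of a $k$-perfect family of hash functions guaranteed by Theorem~\ref{thm:kperfhf}. Concretely, I would first fix an arbitrary bijection between $V(\mathcal{G})$ and $[n]$, so that any function $f\colon[n]\to[k]$ can be interpreted as a colouring $c\colon V(\mathcal{G})\to[k]$. I would then invoke Theorem~\ref{thm:kperfhf} to construct, in time $e^k k^{O(\log k)}\log n$, a $k$-perfect family $\mathcal{H}$ of hash functions from $[n]$ to $[k]$ of size $e^k k^{O(\log k)}\log n$.

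The algorithm itself iterates over the functions $f_i\in\mathcal{H}$: for each $f_i$, interpret it as a colouring of $V(\mathcal{G})$ and execute the DP subroutine from the proof of Theorem~\ref{thm:karbFPT} that computes the earliest arrival $[k]$-colourful walk (if any). Keep track of the best $(s,1,k)$-tour found across all iterations and return it (or return \emph{no} if none is ever found). Correctness reduces to showing that, whenever $I$ is a yes-instance, the DP subroutine is invoked on at least one colouring under which the vertex set $X$ of an optimal $(s,1,k)$-tour is well-coloured. This is exactly the defining property of a $k$-perfect family: since $|X|=k$, there must exist some $f_i\in\mathcal{H}$ whose restriction to (the indices corresponding to) $X$ is a bijection onto $[k]$. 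For that $f_i$, the optimal tour is a $[k]$-colourful walk, and the correctness argument from Theorem~\ref{thm:karbFPT} (which I would import verbatim for Equations~(\ref{eqn:hdv}) and~(\ref{eqn:hdstar})) guarantees that the DP subroutine returns an $(s,1,k)$-tour with the optimal arrival time.

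For the running time, I would simply multiply the two factors: constructing $\mathcal{H}$ takes $e^k k^{O(\log k)}\log n$ time; there are $e^k k^{O(\log k)}\log n$ iterations; and each iteration invokes the DP subroutine, which as analysed in the proof of Theorem~\ref{thm:karbFPT} runs in $O(2^k L n^3)$ time (the fact that hash values can be evaluated in $O(1)$, noted after Theorem~\ref{thm:kperfhf}, means no additional overhead). Multiplying yields the claimed bound $(2e)^k k^{O(\log k)} L n^3 \log n$, absorbing the construction time of $\mathcal{H}$ into this dominant term. The reconstruction of the optimal solution is handled identically to the randomized version via the traceback procedure.

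I do not foresee a genuine obstacle here: the proof is a textbook application of the Naor--Schulman--Srinivasan derandomization, and the only care required is the bookkeeping to confirm that the DP subroutine's correctness and running-time bounds from Theorem~\ref{thm:karbFPT} carry over unchanged when the colouring is supplied by a hash function rather than drawn at random. The most delicate point is stating precisely why a single well-colouring suffices for optimality (not merely feasibility), which follows because the DP subroutine computes the \emph{earliest arrival time} $[k]$-colourful walk, so on the iteration where $X$ is well-coloured the returned walk has arrival time at most that of the optimal $(s,1,k)$-tour.
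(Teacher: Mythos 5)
Your proposal matches the paper's argument exactly: the paper also derandomizes Theorem~\ref{thm:karbFPT} by replacing the $e^k\ln r$ random colourings with an iteration over the $e^kk^{O(\log k)}\log n$ hash functions of the $k$-perfect family from Theorem~\ref{thm:kperfhf}, noting that $k$-perfectness guarantees the vertex set of an optimal $(s,1,k)$-tour is well-coloured in at least one iteration, and multiplying the family size by the $O(2^kLn^3)$ cost of the DP subroutine. Your additional remarks on the bijection with $[n]$ and on why a single well-colouring suffices for optimality are correct elaborations of the same argument.
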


We remark that, since a temporal walk can visit at most $L+1$ vertices in a temporal graph with lifetime~$L$,
Theorem~\ref{thm:karbFPTdet} also implies an \textsf{FPT} algorithm for the following
problem, parameterized by the lifetime $L$ of the given temporal graph: Find a temporal walk that
visits as many distinct vertices as possible.

\subsection{\texorpdfstring{$\textup{\textsf{W}}[2]$-hardness of \textsc{Set TEXP} for parameter $L$}{W[2]-hardness
of Set TEXP for parameter L}}
\label{ss:set}%
The \textsf{NP}-complete \textsc{Hitting Set} problem is defined as
follows~\cite{Garey-Johnson/79}.

\begin{definition}[\textsc{Hitting Set}]
	An instance of \textsc{Hitting Set} is given as a tuple $(U, \mathcal{S}, k)$, 
	where $U=\{a_1,\dots,a_n\}$ is the ground set and $\mathcal{S} = \{S_1,\dots,S_m\}$ is 
	a set of subsets $S_i \subseteq U$. The problem then asks whether or not there exists a 
	subset $U' \subseteq U$ of size at most $k$ such that, for 
	all $i \in [m]$, there exists an $u \in U'$ such that $u \in S_i$.
\end{definition}
It is known that \textsc{Hitting Set} is $\textsf{W}[2]$-hard when parameterized by $k$~\cite{DF_99}.

\begin{theorem}
	\label{thm:sethard}%
	\textsc{Set TEXP} parameterized by $L$ (the lifetime of the 
	input temporal graph) is $\textup{\textsf{W}}[2]$-hard.
\end{theorem}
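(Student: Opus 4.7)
The plan is to exhibit a parameterized reduction from \textsc{Hitting Set} with parameter $k$, which is known to be $\textsf{W}[2]$-hard, to \textsc{Set TEXP} with parameter $L$. Given an instance $(U,\mathcal{S},k)$ of \textsc{Hitting Set} with $U=\{a_1,\dots,a_n\}$ and $\mathcal{S}=\{S_1,\dots,S_m\}$, I would construct a temporal graph $\mathcal{G}$ on vertex set $\{s\}\cup U$, where $s$ is a fresh start vertex, and set the lifetime to $L=2k-1$. Each layer $G_t$ is taken to be the same star: the edge set of $G_t$ is precisely $\{\{s,a_j\}:j\in[n]\}$ for every $t\in[L]$. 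The family of target subsets is $\mathcal{X}=\{X_1,\dots,X_m\}$ with $X_i=S_i$, identifying each element $a_j\in U$ with the corresponding vertex of $\mathcal{G}$.

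The correctness argument rests on a simple structural observation: every edge of $\mathcal{G}$ is incident to $s$, so any strict temporal walk departing from $s$ must alternate between $s$ and vertices of $U$. Because a strict walk uses at most one time-edge per timestep, such a walk contains at most $L=2k-1$ edges and hence visits at most $\lceil L/2\rceil = k$ distinct vertices of $U$. For the forward direction, given any hitting set $\{a_{i_1},\dots,a_{i_{k'}}\}$ of size $k'\le k$, I would take the walk $s,a_{i_1},s,a_{i_2},s,\dots,s,a_{i_{k'}}$, which uses edges at the strictly increasing times $1,2,\dots,2k'-1\le L$ and whose vertex set meets every $X_i=S_i$ by the hitting-set property. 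For the converse, given an $(s,1,X)$-tour whose visited set meets every $X_i$, the observation above implies that $X\cap U$ contains at most $k$ vertices, and this set is a hitting set for $\mathcal{S}$.

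Since $L=2k-1$ depends only on~$k$ and the construction is clearly polynomial-time, this is a valid parameterized reduction and yields the claimed $\textsf{W}[2]$-hardness. I expect the only mildly delicate point to be getting the lifetime exactly right: the value $L=2k-1$ is chosen so that the ``out'' and ``return'' edges of a hitting-set tour just fit into the schedule, while simultaneously preventing a $(k{+}1)$-th element of $U$ from being visited, which is what ensures the reverse implication produces a hitting set of size at most~$k$ rather than $k+1$. Beyond this bookkeeping, there is no combinatorial gadgetry, so the proof reduces to stating the construction and verifying the two short inequalities above.
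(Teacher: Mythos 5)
Your proof is correct and takes essentially the same approach as the paper: a parameterized reduction from \textsc{Hitting Set} with parameter $k$, where the lifetime is a function of $k$ chosen so that a strict walk can visit at most $k$ vertices of $U$ while still being able to reach any $k$ of them. The paper uses a complete graph on $U\cup\{s\}$ in every layer with $L=k$ (walking directly from one hitting-set element to the next), whereas you use a star centred at $s$ with $L=2k-1$; this only changes the bookkeeping, not the argument.
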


\begin{proof}
We give a parameterized reduction from the \textsc{Hitting Set} problem
with parameter $k$ to the \textsc{Set TEXP} problem with parameter~$L$.
Given an instance $I=(U,\mathcal{S},k)$ of \textsc{Hitting Set}, we construct
an instance $I'=(\mathcal{G},s,\mathcal{X})$ of \textsc{Set TEXP} as follows:
The lifetime of $\mathcal{G}$ is set to $L=k$.
In each of the $L$ steps, the graph is a complete graph with vertex
set $U \cup \{s\}$, where $s$ is a start vertex that is assumed not to be in~$U$.
Finally, we set $\mathcal{X}=\mathcal{S}$. We proceed to show
that $I$ is yes-instance if and only if $I'$ is a yes-instance.

If $I$ is a yes-instance, let $U'=\{u_1,u_2,\ldots,u_k\}$ be a hitting
set of size~$k$. Then the walk that moves from $s$ to $u_1$ in step~$1$
and then from $u_{i-1}$ to $u_i$ in step $i$ for $2\le i\le k$ 
is an $(s,1,U')$-tour that visits at least one vertex from each
set in $\mathcal{X}$. Therefore, $I'$ is a yes-instance.

If $I'$ is a yes-instance, let $W$ be a strict temporal
walk that visits at least one vertex from each set in~$\mathcal{X}$.
Let $U'$ be the set of at most $L=k$ vertices that this walk visits
in addition to the start vertex~$s$. Then $U'$ is a hitting set
for $I$. Hence, $I$ is a yes-instance.
\end{proof}

Similar to the case of \textsc{$k$-fixed TEXP},
we can remark that \textsc{$k$-arbitrary TEXP} is also in \textsf{FPT} when parameterized by the
lifetime~$L$: If $L<k-1$, the instance is clearly a no-instance, and if $L\ge k-1$,
the \textsf{FPT} algorithm for \textsc{$k$-arbitrary TEXP}
with parameter $k$ is also an \textsf{FPT} algorithm for parameter~$L$.

\section{Non-Strict \textsc{TEXP} parameterizations}\label{sec:nstexp}
In this section, we study temporal exploration problems in the
non-strict setting.
Let $\mathcal{G} = \langle G_1,\ldots,G_L\rangle$ be the given non-strict temporal graph,
and let $s\in V(\mathcal{G})$ be the given start vertex.
When analysing running-times in this section, we assume that the non-strict
temporal graph is given by providing, for each timestep~$t$, a list of the vertex sets (with
each of these sets given as a list of vertices) of the components in that timestep.
This representation has size $\Theta(Ln)$. If the graph was given in the same form as
a strict temporal graph, this representation could be computed by a pre-processing
step that runs in time $O(Ln^2)$.

First, we show in Section~\ref{ss:kfixNS} that \textsf{FPT} algorithms for \textsc{$k$-fixed NS-TEXP} and \textsc{$k$-arbitrary NS-TEXP}
can be derived using similar techniques as in Section~\ref{sec:stricttexp}.
After that, we show that \textsc{NS-TEXP} and its variants can all be solved in polynomial
time for $\gamma=2$ (Section~\ref{ss:gammaTwo}) and that
\textsc{NS-TEXP} is in \textsf{FPT} when 
parameterized by the lifetime $L$ (Section~\ref{ss:NStexpFPT}).
Finally, we prove $\textsf{W}[2]$-hardness for the \textsc{Set NS-TEXP} problem when the same parameter is considered (Section~\ref{ss:NSsetWTwo}).

\subsection{\texorpdfstring{\textsc{$k$-fixed NS-TEXP} and \textsc{$k$-arbitrary NS-TEXP}}{k-fixed NS-TEXP and k-arbitrary NS-TEXP}}
\label{ss:kfixNS}%
We now define $sp(u,v,t)$
as the duration of a shortest (i.e., having minimum arrival time) \emph{non-strict}
temporal walk in $\mathcal{G}$ that starts at $u \in V(\mathcal{G})$ in timestep $t$ and ends at $v \in V(\mathcal{G})$. If $u=v$ or if $u$ and $v$ are in the same component in step~$t$, then
$sp(u,v,t)=0$. If there is no such non-strict temporal walk, we let $sp(u,v,t)=\infty$.

\begin{lemma}
\label{lem:nonstrictsp}%
For given $u$ and $t$, one can compute the
values $sp(u,v,t)$ for all $v\in V(\mathcal{G})$ in $O(Ln)$ time. Furthermore, for
each $v\in V(\mathcal{G})$, a shortest walk starting at $u$ at time $t$ and reaching $v$
can then be determined in time proportional to $1+sp(u,v,t)$.
\end{lemma}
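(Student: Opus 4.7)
The plan is to compute the arrival-time values by a layer-by-layer breadth-first search through the time-expanded reachability structure of $\mathcal{G}$. For $i\ge 0$ let $R_i\subseteq V(\mathcal{G})$ denote the set of vertices $v$ with $sp(u,v,t)\le i$. Then $R_0$ is the component containing $u$ in $G_t$, and I claim that for $i\ge 0$,
\[R_{i+1}\;=\;R_i\;\cup\;\bigcup\bigl\{\,C:\ C\text{ is a component of } G_{t+i+1}\text{ with }C\cap R_i\neq\emptyset\,\bigr\}.\]
This is immediate from the definition of a non-strict walk: any walk of duration $i+1$ from $u$ at time $t$ ending at some $v$ must visit some $w\in R_i$ at time $t+i$ and then use the component of $v$ in $G_{t+i+1}$, which contains $w$; conversely, any walk reaching $w\in R_i$ can be extended by sitting in $w$'s successive components up to time $t+i$ and then moving within any component of $G_{t+i+1}$ that contains $w$.

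To implement this efficiently I would maintain a Boolean array indicating the current membership in $R$ together with an arrival-time array $A[v]$, initialised by setting $A[v]:=0$ and marking $v$ in $R$ for every $v$ in the initial component of $u$ (and $A[v]:=\infty$ otherwise). Then, for $i=0,1,\ldots,L-t-1$ in turn, I scan the vertex list of each component $C$ of $G_{t+i+1}$ exactly once: during that single scan I use the Boolean marker to test whether some vertex $w\in C$ is already in $R$, and if so I mark every previously unmarked $v\in C$ as newly added, set $A[v]:=i+1$, and record a parent pointer $p(v):=w$. After the loop, $sp(u,v,t)=A[v]$, with $\infty$ signifying unreachability. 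Since every vertex appears in exactly one component per timestep, the total work across all layers is bounded by the size of the input representation, namely $O(Ln)$.

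For walk reconstruction I would, either during the same sweep or in an independent $O(Ln)$ preprocessing pass, compute a table $\mathrm{comp}[v,t']$ pointing to the component of $v$ in $G_{t'}$ for each pair $(v,t')$. To recover a shortest walk ending at a reachable $v$, I proceed backwards in time: emit $\mathrm{comp}[v,t+A[v]]$, then step to $w:=p(v)$ and emit $\mathrm{comp}[w,t+A[v]-1]$; from $w$ I continue emitting $\mathrm{comp}[w,t']$ for $t'=t+A[v]-2,\ldots,t+A[w]$ (these ``waiting'' steps are valid because $w$ lies in every component emitted, so consecutive components intersect), and on reaching time $t+A[w]$ I follow $p(w)$ and repeat, stopping once time $t$ is reached. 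The output has exactly $1+sp(u,v,t)$ components, each produced in $O(1)$ time, giving the claimed reconstruction bound.

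The main thing to be careful about is that the per-layer cost is truly $O(n)$: each component's vertex list must be traversed in a single pass that simultaneously locates an intersection vertex $w\in C\cap R$ and marks the new arrivals, so that the work per vertex--timestep incidence is $O(1)$ rather than proportional to the number of pairs of consecutive components. The Boolean $R$-marker provides the required $O(1)$ membership tests, and the overall $O(Ln)$ bound then follows directly from the fact that $\sum_{t'}\sum_{C\in G_{t'}}|C|=Ln$.
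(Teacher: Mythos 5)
Your proposal is correct and follows essentially the same approach as the paper: a forward, layer-by-layer sweep that maintains the set of vertices reachable by the current timestep, marks each component of the next layer that intersects this set, records arrival times and predecessor pointers in $O(n)$ work per layer, and reconstructs a shortest walk by tracing predecessors backwards (inserting the waiting steps at a predecessor whose arrival time is strictly earlier). The recurrence for $R_{i+1}$ and the accounting via $\sum_{t'}\sum_{C\in G_{t'}}|C|=Ln$ match the paper's induction and running-time argument.
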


\begin{proof}
Let $V=V(\mathcal{G})$.
For each $w\in V$, maintain a label $r(w)$ to represent whether $w$ is reachable
by the time step under consideration, and a label $a(w)$ to represent the
earliest arrival time at $w$ if $w$ is reachable. In addition, we will remember
a predecessor $p(w)$ for every reachable vertex.
Initialise the current time to $t_c=t$; set $r(w)=\True$, $a(w)=t_c$ and $p(w)=u$ for
all $w$ in the component of $u$ at time~$t_c$; set $r(w)=\False$ and $a(w)=\infty$
for all other vertices. This takes $O(n)$ time.

Then repeat the following step until either all vertices are reachable or $t_c$ equals
the lifetime of the graph:
Increase $t_c$ by one. For each component $B$ of step $t_c$, check whether $B$ contains
a vertex $w$ with $r(w)=\True$ and, if so, mark~$B$ and remember $w$ as $p_B$.
For each vertex $w$ with $r(w)=\False$
in any marked component $B$ of step $t_c$, we then set $r(w)=\True$, $a(w)=t_c$
and $p(w)=p_B$.
Each execution of this step takes $O(n)$ time.

Finally, for each vertex $v\in V$, we set $sp(u,v,t)=a(v)-t$.

To construct the shortest temporal walk corresponding to a value $sp(u,v,t)$,
we trace back the vertices (and their components) starting with $v$ (visited
at time $t'=t+sp(u,v,t)$), $p(v)$ (visited at time $a(p(v))\le t'-1$),
$p(p(v))$, and so on.

It is clear that the running-time is $O(Ln)$. Correctness can be shown by induction:
When the step for value $t_c$ has been completed, a vertex $w$ satisfies $r(w)=\True$
if and only if $w$ is reachable from $u$ with arrival time at most $t_c$, and
in that case $a(w)=t'$ is the earliest arrival time at $w$ and, if $t'>t$,
$p(w)$ is a vertex that is reachable with arrival time at most $t'-1$ and
from which $w$ can be reached in step $t'$.
\end{proof}

Next, we observe that it is easy to see that Equations~(\ref{eqn:fsv}) and (\ref{eqn:fstar}) from
the proof of Theorem~\ref{thm:kfixedFPT} remain valid in the non-strict case, as
the arguments for correctness remain the same. The factor $Ln^2$ in the running-time
of Theorem~\ref{thm:kfixedFPT} improves to $Ln$ in the non-strict case as,
by Lemma~\ref{lem:nonstrictsp}, it takes only $O(Ln)$ time to compute $sp(u,v,t)$
for all $v\in V$ right after $F(S',u)=t$ has been computed for some set $S'$
and $u\in S'$. Thus, we obtain:

\begin{corollary}\label{cor:NSkfixedFPT}
It is possible to decide any instance $I = (\mathcal{G},s,X,k)$ of \textsc{$k$-fixed NS-TEXP}, and return an optimal solution if $I$ is a yes-instance, in time $O(2^kkLn)$, where $n = |V(\mathcal{G})|$ and $L$ is $\mathcal{G}$'s lifetime.
\end{corollary}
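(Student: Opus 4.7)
The plan is to reuse, essentially verbatim, the dynamic programming algorithm from the proof of Theorem~\ref{thm:kfixedFPT}, but reinterpreted in the non-strict setting and with the shortest-walk computations replaced by invocations of Lemma~\ref{lem:nonstrictsp}. Concretely, I would redefine $F(S,v)$ as the minimum arrival time of any non-strict temporal walk that starts at $s$ in step~$1$, visits all vertices of $S\subseteq X$, and ends at $v\in S$, and use the same recurrence (\ref{eqn:fsv}) together with $F^{*}=\min_{v\in X}F(X,v)$ as in (\ref{eqn:fstar}). The table is filled in order of increasing $|S|$, each entry storing the predecessor pointer $p(S,v)$ that realises the minimum so that a witnessing walk can be reconstructed if $F^{*}<\infty$.

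The first step is to verify correctness of the recurrence in the non-strict model. The base case $|S|=1$ reduces to a single shortest non-strict walk from $s$ to $v$ starting at time~$1$, which is exactly $1+sp(s,v,1)$ under the new meaning of $sp$. For the inductive step, the key observation is the concatenation property stated just after Definition~\ref{nstw}: a non-strict walk $W_{1}$ with arrival time $t$ that ends in some component $C_{t,j}$ can be glued to any non-strict walk $W_{2}$ that starts at time $t$ at a vertex of $C_{t,j}$, producing a valid non-strict walk whose arrival time is the arrival time of $W_{2}$. Since the definition of $sp(u,v,t)$ already captures this (with $sp(u,v,t)=0$ when $u$ and $v$ share a component in step~$t$), the same induction on $|S|$ used in Theorem~\ref{thm:kfixedFPT} goes through without change, and the traceback reconstructs an actual $(s,1,X)$-tour whenever $F^{*}$ is finite.

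For the running-time, I would argue exactly as in Theorem~\ref{thm:kfixedFPT}: there are $O\!\left(\sum_{i=1}^{k}\binom{k}{i}i\right)=O(2^{k}k)$ table entries to compute, and for each set $S'$ of size $i-1$ and each $u\in S'$ we must, after computing $F(S',u)$, know $sp(u,y,F(S',u))$ for every $y\in V(\mathcal{G})$ needed at the next level. Applying Lemma~\ref{lem:nonstrictsp} at that moment yields all such values in $O(Ln)$ time (the improvement from $O(Ln^{2})$ to $O(Ln)$ is the only substantive change relative to Theorem~\ref{thm:kfixedFPT}). The per-entry combination work is still $O(k)=O(n)$, which is dominated by $O(Ln)$, so the total cost is $O(2^{k}k)\cdot O(Ln)=O(2^{k}kLn)$; the traceback adds only $O(kLn)$ which is absorbed.

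I do not expect a serious obstacle here: the only point requiring any care is checking that the concatenation convention for non-strict walks (arrival time $t$ rather than $t+1$) is consistent with the way the recurrence adds $sp(u,v,F(S-\{v\},u))$ to $F(S-\{v\},u)$. Once this is noted, correctness and the runtime bound follow immediately from the proof of Theorem~\ref{thm:kfixedFPT} together with Lemma~\ref{lem:nonstrictsp}.
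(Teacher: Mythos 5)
Your proposal is correct and follows essentially the same route as the paper: the paper likewise reuses Equations~(\ref{eqn:fsv}) and~(\ref{eqn:fstar}) verbatim with $sp$ reinterpreted as the non-strict shortest-walk duration, and obtains the improved $O(2^kkLn)$ bound by substituting Lemma~\ref{lem:nonstrictsp} for Theorem~\ref{thm:shortpaths} in the per-entry precomputation. Your extra check that the arrival-time convention (start time plus duration) keeps the recurrence's additions consistent is a sensible, if implicit in the paper, verification.
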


Similarly, Equations (\ref{eqn:hdv}) and (\ref{eqn:hdstar}) from the proof of
Theorem~\ref{thm:karbFPT} remain valid, and the derandomization used in the
proof of Theorem~\ref{thm:karbFPTdet} works for the non-strict case without
any alterations. Thus, we obtain the following corollary of
Theorems~\ref{thm:karbFPT} and~\ref{thm:karbFPTdet}, where again we save
a factor of $n$ in the running-time because we can use Lemma~\ref{lem:nonstrictsp}
instead of Theorem~\ref{thm:shortpaths}.

\begin{corollary}\label{cor:NSkarbFPT}
	For every $\varepsilon > 0$, there exists a Monte Carlo algorithm that, with probability $1-\varepsilon$, decides a given instance $I = (\mathcal{G},s,k)$ of \textsc{$k$-arbitrary NS-TEXP}, and returns an optimal solution if $I$ is a yes-instance, in time $O((2e)^kLn^2\log \frac{1}{\varepsilon})$, where $n = |V(\mathcal{G})|$ and $L$ is $\mathcal{G}'s$ lifetime.
Furthermore, there is a deterministic algorithm that can solve a given instance $(\mathcal{G},s,k)$ of $\textsc{$k$-arbitrary NS-TEXP}$ in $(2e)^kk^{O(\log k)}Ln^2\log n$ time.
If the instance is a yes-instance, the algorithm also returns an optimal solution.
\end{corollary}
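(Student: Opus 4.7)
The plan is to run the colour-coding template of Theorem~\ref{thm:karbFPT} verbatim, with the single change that every invocation of Theorem~\ref{thm:shortpaths} is replaced by a call to Lemma~\ref{lem:nonstrictsp}, reinterpreting $sp(u,v,t)$ as the duration of a shortest non-strict walk from $u$ starting at time~$t$. First I would observe that the dynamic programming recurrences~(\ref{eqn:hdv}) and~(\ref{eqn:hdstar}) carry over unchanged: the induction argument from the proof of Theorem~\ref{thm:karbFPT} only relies on the fact that a walk arriving at a vertex $u$ at time $T$ can be concatenated with a walk that starts at $u$ at time $T$, and this composition rule holds for non-strict walks by the remark immediately following Definition~\ref{nstw}. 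The base case $1+sp(s,v,1)$ also remains correct, because in the non-strict convention the arrival time of a walk equals its start time plus its duration, so a walk of duration $sp(s,v,1)$ starting at $s$ in step~$1$ arrives at time exactly $1+sp(s,v,1)$.

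For the running time, I would follow the accounting in the proof of Theorem~\ref{thm:karbFPT} almost line by line: immediately after a value $H(D',u)$ has been filled in, I would call Lemma~\ref{lem:nonstrictsp} once to compute $sp(u,v,H(D',u))$ for every $v\in V(\mathcal{G})$, which now costs $O(Ln)$ instead of the $O(Ln^2)$ of Theorem~\ref{thm:shortpaths}. With $O(2^k n)$ table entries and $O(n)$ additional work per entry, one execution of the DP subroutine then runs in $O(2^k L n^2)$, and multiplying by the $O(e^k \log(1/\varepsilon))$ outer iterations yields the promised $O((2e)^k L n^2 \log \tfrac{1}{\varepsilon})$ randomized bound. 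The success-probability calculation based on $k!/k^k \geq 1/e^k$ is the same as in Theorem~\ref{thm:karbFPT} and need not be redone.

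To derandomize, I would replace the random colourings by the enumeration of a $k$-perfect family of hash functions of size $e^k k^{O(\log k)} \log n$ obtained from Theorem~\ref{thm:kperfhf}, exactly as in the proof of Theorem~\ref{thm:karbFPTdet}. This multiplies the DP cost $O(2^k L n^2)$ by $e^k k^{O(\log k)} \log n$, giving the stated $(2e)^k k^{O(\log k)} L n^2 \log n$ deterministic bound; reconstruction of an optimal $(s,1,k)$-tour in the yes-case is then obtained through the usual pointer/traceback mechanism, which costs another $O(k L n)$ and is absorbed.

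The main obstacle, if it deserves the name, is a bookkeeping check rather than a genuine mathematical difficulty: one must verify that the off-by-one conventions for arrival times differ between the strict and non-strict settings in exactly the way that makes both the base case $1+sp(s,v,1)$ and the combination step $H(D^-_{c(v)},u)+sp(u,v,H(D^-_{c(v)},u))$ compute true arrival times of concatenated walks, and that the $O(Ln)$ precomputation per completed table entry really does allow each of the $O(n)$ candidate predecessors appearing in~(\ref{eqn:hdv}) to be examined in $O(1)$ time thereafter. Once these details are settled, the corollary follows as a direct transposition of the strict case.
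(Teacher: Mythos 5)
Your proposal is correct and follows essentially the same route as the paper: the paper's own justification is exactly that Equations~(\ref{eqn:hdv}) and~(\ref{eqn:hdstar}) and the derandomization of Theorem~\ref{thm:karbFPTdet} carry over unchanged to the non-strict setting, with the factor-$n$ saving coming from substituting Lemma~\ref{lem:nonstrictsp} (cost $O(Ln)$) for Theorem~\ref{thm:shortpaths} (cost $O(Ln^2)$) in the per-entry precomputation. Your explicit check of the arrival-time conventions and of the concatenation rule for non-strict walks is more detailed than the paper's one-line remark, but it is the same argument.
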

\subsection{Non-strict exploration with at most two components per step}
\label{ss:gammaTwo}%
Let $\mathcal{G} = \langle G_1,\ldots,G_L\rangle$ be the given non-strict temporal graph.
If there is a step $t$ in which the partition $G_t$ consists of a single
component $C_{t,1}$, then it it trivially possible to visit all vertices:
We simply wait at the start vertex until step~$t$, and then visit all
vertices in step~$t$.
Therefore, for all four problem variants (\textsc{NS-TEXP},
\textsc{$k$-fixed NS-TEXP}, \textsc{$k$-arbitrary NS-TEXP}, and
\textsc{Set NS-TEXP}),
instances where the maximum number of components per
step is $\gamma=1$ are trivially yes-instances, and instances with $\gamma=2$
are also yes-instances if at least one step has a single component.
In the remainder of this section, we therefore consider the case $\gamma=2$
under the assumption that the partition in every step consists of exactly two
components. Furthermore, we can assume without loss of generality that
no two consecutive steps have the same two components: Any number
of consecutive steps that all have the same two components could be
replaced by a single step without changing
the answer to any of the four variants of the \textsc{NS-TEXP} problem.

First, we are interested in the movements that the partitions in two consecutive steps
allow. We refer to two consecutive steps $i$ and $i+1$ as a \emph{transition}.

\begin{definition}
\label{def:transitions}%
A transition between step $i$ with partition $G_i=(A_i,B_i)$
and step $i+1$ with partition $G_{i+1}=(A_{i+1},B_{i+1})$
is called \emph{free} if the four sets $A_i\cap A_{i+1}$,
$A_i\cap B_{i+1}$, $B_i\cap A_{i+1}$, $B_i\cap B_{i+1}$
are all non-empty. If exactly one of these sets
is empty, the transition is called \emph{restricted}.
\end{definition}

\begin{figure}
\centering
\scalebox{0.65}{\input{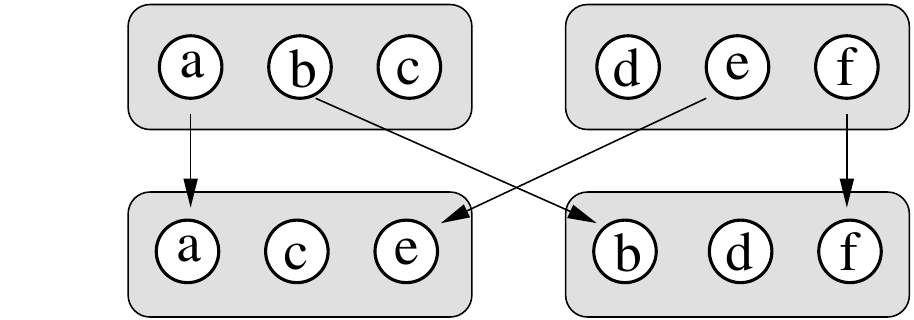_t}}
\hspace{1cm}
\scalebox{0.65}{\input{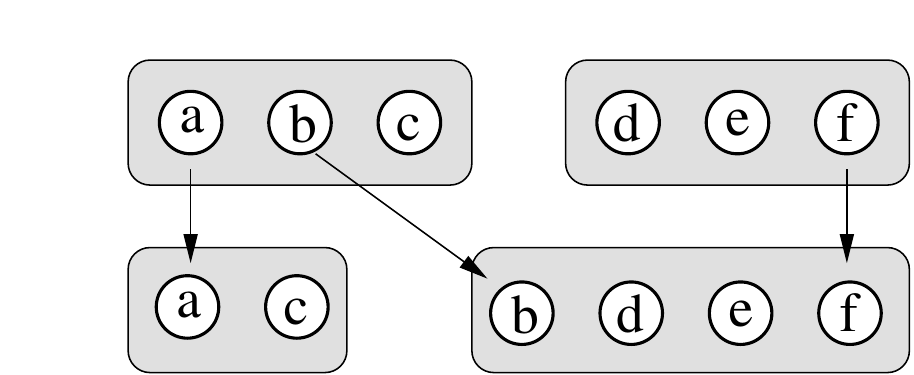_t}}

\caption{Free transition (left) and restricted transition (right).}
\label{fig:freerest}
\end{figure}%
See Figure~\ref{fig:freerest} for an illustration.
In a free transitions, a walk can reach any of the two
components in step $i+1$ no matter which component the
walk visits in step~$i$. In a restricted transition,
there is one component in step~$i$ such that one component
in step~$i+1$ cannot be reached from it. We show next
that these are the only possible types of transitions.

\begin{lemma}
\label{lem:freeorrest}%
Every transition is either free or restricted.
\end{lemma}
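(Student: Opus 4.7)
The plan is to argue by exhaustive case analysis on how many of the four intersections $A_i\cap A_{i+1}$, $A_i\cap B_{i+1}$, $B_i\cap A_{i+1}$, $B_i\cap B_{i+1}$ can be empty. The key starting observation is that these four sets form a partition of $V(\mathcal{G})$, since $(A_i,B_i)$ and $(A_{i+1},B_{i+1})$ are both partitions of $V(\mathcal{G})$. Also, each of $A_i,B_i,A_{i+1},B_{i+1}$ is non-empty, because by our standing assumption every step consists of exactly two components. Finally, we will use the assumption made earlier in the section that no two consecutive steps have the same pair of components.

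The goal is to rule out the possibility that two or more of the four intersections are empty (the cases of zero or one empty intersections correspond exactly to the two cases in Definition~\ref{def:transitions}). I would enumerate the $\binom{4}{2}=6$ possible ways to choose two empty intersections. Four of these cases are immediate: whenever the two chosen empty intersections together cover an entire row or column of the $2\times 2$ grid of intersections, one of $A_i,B_i,A_{i+1},B_{i+1}$ must itself be empty, contradicting the assumption that all four components are non-empty. For example, if $A_i\cap A_{i+1}=\emptyset$ and $A_i\cap B_{i+1}=\emptyset$, then $A_i=(A_i\cap A_{i+1})\cup(A_i\cap B_{i+1})=\emptyset$.

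The two remaining cases are the \emph{diagonal} ones. If $A_i\cap A_{i+1}=\emptyset$ and $B_i\cap B_{i+1}=\emptyset$, then $A_i\subseteq B_{i+1}$ and $B_i\subseteq A_{i+1}$, and by the symmetric inclusions (using that $B_{i+1}$ cannot meet $B_i$, etc.) we conclude $A_i=B_{i+1}$ and $B_i=A_{i+1}$. Hence the partitions $G_i$ and $G_{i+1}$ coincide (as sets of components), contradicting the assumption that no two consecutive steps have the same two components. The other diagonal case, $A_i\cap B_{i+1}=\emptyset$ and $B_i\cap A_{i+1}=\emptyset$, forces $A_i=A_{i+1}$ and $B_i=B_{i+1}$ by the same reasoning, giving the same contradiction. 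Since three or four empties also contain one of the six pairs, this completes the argument. The only mildly delicate point is invoking the "no repeated consecutive partitions" assumption for the diagonal cases, but that assumption is precisely designed to rule them out.
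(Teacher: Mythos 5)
Your proof is correct and uses the same ingredients as the paper's (the four intersections partition $V$, each component is non-empty, and consecutive partitions differ); the paper merely organizes the argument differently, assuming WLOG that $A_i\cap B_{i+1}=\emptyset$ and directly deducing that the other three intersections are non-empty, whereas you rule out every pair of empty intersections by case analysis. Both are valid and essentially equivalent.
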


\begin{proof}
Assume that the transition from $G_i$ to $G_{i+1}$ is
not free. Assume without loss of generality that
$A_i\cap B_{i+1}$ is empty.
This means that
every vertex of $A_i$ must be contained in $A_{i+1}$.
As we assume that the partitions of consecutive steps
are different, we get that $A_i\subset A_{i+1}$
and, hence, $B_i \supset B_{i+1}$. This implies
that at least one vertex from $B_i$ is in $A_{i+1}$.
Furthermore, neither $A_{i+1}$ nor $B_{i+1}$ can be
empty, so there must also be a vertex in $B_i\cap B_{i+1}$.
Hence, the transition is restricted.
\end{proof}

The proof of Lemma~\ref{lem:freeorrest} shows
that in a restricted transition there is one component
that shrinks (gets replaced by a strict subset)
and one that grows (gets replaced by a strict superset).
We call the former the \emph{shrinking component}
and the latter the \emph{growing component} (as
indicated in Figure~\ref{fig:freerest}).

\begin{lemma}
\label{lem:shrinkrest}%
If there is a restricted transition from step $i$ to $i+1$, a walk
that visits the shrinking component in step~$i$
can visit all vertices of the graph in steps $i$ and $i+1$.
\end{lemma}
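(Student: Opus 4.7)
The plan is to unpack the set-inclusion structure that a restricted transition imposes, and then choose the right landing vertex at the end of step~$i$. By the proof of Lemma~\ref{lem:freeorrest}, I may assume without loss of generality that $A_i \cap B_{i+1} = \emptyset$, so that $A_i \subsetneq A_{i+1}$ and $B_{i+1} \subsetneq B_i$; in that case $B_i$ is the shrinking component and $A_i$ is the growing one.

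The key observation is that, since $B_{i+1} \subsetneq B_i$, the set $B_i \setminus B_{i+1}$ is non-empty, and moreover it is contained in $A_{i+1}$, because $A_{i+1} = V(\mathcal{G}) \setminus B_{i+1}$. Hence $B_i \cap A_{i+1} \neq \emptyset$, and any vertex $v \in B_i \cap A_{i+1}$ simultaneously lies in the shrinking component at step~$i$ and in the growing component at step~$i+1$.

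I will then construct the desired walk as follows. The walk visits $B_i$ in step~$i$; using the non-strict movement rule, it can visit every vertex of $B_i$ within that step and choose to finish step~$i$ at any vertex of $B_i$. I let it finish at a vertex $v \in B_i \cap A_{i+1}$. Because $v \in A_{i+1}$, the walk now enters the component $A_{i+1}$ at the start of step~$i+1$, and during step~$i+1$ it visits every vertex of $A_{i+1}$.

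Finally, the set of vertices visited across the two steps is $B_i \cup A_{i+1} \supseteq B_i \cup (V(\mathcal{G}) \setminus B_i) = V(\mathcal{G})$, so the walk indeed visits all vertices. There is no real obstacle here: the argument is almost entirely the set identity together with the fact that a non-strict walk can exit its current component at any chosen vertex, and the only thing to be careful about is selecting the endpoint of step~$i$ inside $B_i \setminus B_{i+1}$ rather than inside $B_{i+1}$.
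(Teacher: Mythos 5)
Your proof is correct and follows essentially the same route as the paper's: visit all of the shrinking component $B_i$ in step~$i$, end that step at a vertex of $B_i\setminus B_{i+1}\subseteq A_{i+1}$, and then sweep the grown component $A_{i+1}$ in step~$i+1$, covering $B_i\cup A_{i+1}=V(\mathcal{G})$. You simply spell out the set inclusions from Lemma~\ref{lem:freeorrest} in more detail than the paper does.
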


\begin{proof}
The walk can visit all vertices of the
shrinking component in step $i$ and then end step $i$ at
a vertex that leaves the shrinking component. In step
$i+1$, the walk then visits all vertices in the component that
has grown. It is easy to see that every vertex is contained
in the two components visited by the walk.
\end{proof}

\begin{lemma}
\label{lem:freerestall}
If a restricted transition follows a free transition,
the whole graph can be explored.
\end{lemma}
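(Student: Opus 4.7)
The plan is to combine a free transition, which lets the walk land in any chosen component of the subsequent step, with Lemma~\ref{lem:shrinkrest}, which handles the rest of the exploration once the walk is positioned inside the shrinking component of a restricted transition. Suppose the free transition is from step $i$ to step $i+1$ and the restricted transition is from step $i+1$ to step $i+2$, and let $S\subseteq V$ denote the shrinking component in $G_{i+1}$ (relative to the restricted transition $i+1\to i+2$).

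First I would have the walk remain at the start vertex $s$ throughout steps $1,2,\ldots,i-1$: this is always legal, because in every step $s$ lies in some component, and a non-strict walk is allowed to traverse no edge at all during a step. At the beginning of step $i$ the walk is therefore at $s$, which belongs to some component $C_i^{s}$ of $G_i$. By Definition~\ref{def:transitions}, the freeness of the transition from $i$ to $i+1$ says that \emph{every} one of the four intersections between a component of $G_i$ and a component of $G_{i+1}$ is non-empty; in particular $C_i^{s}\cap S\neq\emptyset$. I would then pick a vertex $v^{*}\in C_i^{s}\cap S$ and route the walk from $s$ to $v^{*}$ during step $i$, which is possible because $s$ and $v^{*}$ lie in the same component $C_i^{s}$ of $G_i$.

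At the start of step $i+1$ the walk is then at $v^{*}\in S$, so it enters step $i+1$ already inside the shrinking component. Lemma~\ref{lem:shrinkrest} now applies directly and extends the walk through steps $i+1$ and $i+2$ so that it visits every vertex of~$V$. If $i+2<L$, the walk simply stays at its current vertex for the remaining steps; this preserves the completed exploration.

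The main subtlety I anticipate is the routing step during step $i$: the proof needs not merely that $S$ be reachable from \emph{some} component of $G_i$, but that it be reachable from the specific component $C_i^{s}$ that happens to contain~$s$. This is exactly where the full strength of a free transition (all four component-intersections non-empty, as opposed to only three for a restricted transition) is used; if the transition $i\to i+1$ were only restricted, we would have no guarantee that $C_i^{s}$ meets $S$, and the argument would break down.
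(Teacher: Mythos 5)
Your proof is correct and follows essentially the same route as the paper's: wait at $s$, use the freeness of the transition into the step containing the shrinking component to enter that component, and then invoke Lemma~\ref{lem:shrinkrest} to finish the exploration in the next two steps. The only difference is that you spell out the waiting phase and the reason the free transition guarantees reachability of the shrinking component from the component containing $s$, which the paper leaves implicit.
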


\begin{proof}
Assume that there is a free transition from step $i-1$
to step $i$ and a restricted transition from step $i$
to step $i+1$. Let $B_i$ be the shrinking component
in the restricted transition. Then a walk can visit
$B_i$ in step $i$ (because the free transition allows
it to reach $B_i$) and then, by Lemma~\ref{lem:shrinkrest},
visit all remaining unvisited vertices in step~$i+1$.
\end{proof}

\begin{lemma}
\label{lem:logfree}%
In $1+\log_2 n$ consecutive free transitions,
the whole graph can be explored.
\end{lemma}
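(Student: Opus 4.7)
The plan is a halving argument enabled by the total freedom to choose either component at every step that follows a free transition. The key opening observation is that if a walk reaches some vertex $v$ at step $a$ and the transitions $a\to a+1, a+1\to a+2, \ldots, a+m-1 \to a+m$ are all free, then the walk must visit the component $C_a$ containing $v$ at step $a$, but at every subsequent step $a+j$ (for $j \in [m]$) it is free to visit either component of $G_{a+j}$. Indeed, by Definition~\ref{def:transitions}, every free transition has all four pairwise intersections between components in consecutive steps nonempty, so whichever component the walk occupies in step $i$ it can end that step at a vertex lying in the intersection with whichever component it wishes to enter at step $i+1$. By Definition~\ref{nstw}, any such sequence of choices gives a valid non-strict temporal walk.

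Exploiting this freedom, I would track the set $U$ of vertices not yet visited. After step $a$ we have $U_0 = V \setminus C_a$, so $|U_0| \leq n-1$. For each $j \in [m]$, the partition $G_{a+j} = \{A_{a+j}, B_{a+j}\}$ splits $U_{j-1}$ into disjoint parts $U_{j-1} \cap A_{a+j}$ and $U_{j-1} \cap B_{a+j}$, at least one of which has size at least $|U_{j-1}|/2$; the walk greedily enters the component corresponding to this larger part, yielding $|U_j| \leq |U_{j-1}|/2$.

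Setting $m = 1 + \log_2 n$, iteration gives $|U_m| \leq (n-1)/2^{1+\log_2 n} = (n-1)/(2n) < 1/2$, so the integer $|U_m|$ must equal $0$ and the walk has visited every vertex of $\mathcal{G}$. There is no real obstacle in this plan; the only subtlety is verifying that each greedy choice yields a legal non-strict temporal walk, which is immediate from the definition of a free transition, and that the halving estimate is tight enough, which follows from the trivial bound $|U_0|\le n-1$ rather than $|U_0|\le n$.
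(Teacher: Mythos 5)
Your proposal is correct and follows essentially the same greedy halving argument as the paper's own proof: at each free transition, enter the component containing at least half of the still-unvisited vertices, so that after $1+\log_2 n$ such steps fewer than one unvisited vertex remains. The only cosmetic difference is that you start from the bound $|U_0|\le n-1$ while the paper bounds the remainder by $n/2^{1+\log_2 n}<1$ directly; both suffice.
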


\begin{proof}
Let $A$ be the component that the walk visits in the
first step of the first free transition. In each of
the $1+\log_2 n$ free transitions, we can choose as component
to visit in the next step the one that contains more of the
previously unvisited vertices. In this way, we are guaranteed
to visit at least half of all the remaining unvisited vertices in each
of these $1+\log_2 n$ steps. The number of unvisited vertices
remaining at the end of these $1+\log_2 n$ steps is hence
at most $n/2^{1+\log_2 n}<1$.
\end{proof}

\begin{theorem}
\label{thm:gammaTwo}%
There is an algorithm that solves instances of \textsc{NS-TEXP}
with $\gamma=2$ in $O(Ln+n^2\log n)$ time.
\end{theorem}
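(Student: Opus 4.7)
The plan is to design an algorithm that uses the structural lemmas above to classify the possibilities and either build a witness walk or conclude that none exists. First, I handle the trivial cases already noted in the text: if any step has a single component, output yes with a walk that waits at $s$ and then explores in that single-component step. Henceforth assume every step has exactly two components. In a single $O(Ln)$ sweep I preprocess: for each transition $t\to t{+}1$ compute the four intersections $A_t\cap A_{t+1}$, $A_t\cap B_{t+1}$, $B_t\cap A_{t+1}$, $B_t\cap B_{t+1}$, which by Lemma~\ref{lem:freeorrest} classifies it as free or restricted and, in the restricted case, identifies the shrinking and growing components. In the same sweep I compute, for each $t$, the set $R_t$ of components of $G_t$ that a non-strict walk starting at $s$ in step $1$ can occupy at step $t$ (initially $R_1$ contains only the component of $G_1$ containing $s$, and $R_{t+1}$ is obtained from $R_t$ using the stored intersections).

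The algorithm then scans for the first restricted transition $t\to t{+}1$ whose shrinking component lies in $R_t$. If one is found, I reconstruct a prefix walk from $s$ to that shrinking component at step $t$ by backtracking through the stored $R$-information, and extend it using the two-step construction of Lemma~\ref{lem:shrinkrest}; output yes together with this walk. If no such transition exists, the transition sequence is forced into a highly restricted shape: as soon as any free transition occurs, $R$ contains both components in the next step, so the next restricted transition would satisfy the condition just ruled out. Hence the transition sequence must decompose into a (possibly empty) prefix of restricted transitions in which the walk is pinned to the growing side, followed by a (possibly empty) suffix of free transitions. In this case I simulate the unique admissible walk through the restricted prefix, then in the free suffix apply the greedy-doubling strategy of Lemma~\ref{lem:logfree}, choosing at each step the next component containing more currently unvisited vertices (tracked via a length-$n$ bitmask). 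Return the walk if it visits all of $V(\mathcal{G})$; otherwise return no.

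Correctness follows from the cited lemmas together with a forced-structure argument: when the scan fails, every non-strict walk is pinned to the growing chain throughout the restricted prefix, and in the free suffix the greedy choice weakly dominates every alternative, since adding the component with more unvisited vertices yields a strictly larger visited set, which can only help subsequent choices. For the runtime, the preprocessing sweep and reachability computation are $O(Ln)$, writing out the prefix walk is $O(Ln)$, and the doubling phase runs for at most $1+\log_2 n$ iterations with $O(n)$ work each for updating the bitmask and listing component vertices, contributing $O(n\log n)$; the stated bound $O(Ln+n^2\log n)$ comfortably absorbs all of these costs. The main obstacle will be the rigorous forced-structure argument establishing necessity---namely, that if the scan fails and the greedy doubling does not complete the exploration, then no non-strict walk explores $\mathcal{G}$---since this is the step that promotes the sufficient conditions of Lemmas~\ref{lem:shrinkrest}--\ref{lem:logfree} into a full decision procedure.
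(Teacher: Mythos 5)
Your overall decomposition matches the paper's: dispose of single-component steps, classify transitions as free or restricted, answer yes whenever a reachable shrinking component exists (which subsumes both Lemma~\ref{lem:freerestall} and Lemma~\ref{lem:shrinkrest}), and reduce to an initial restricted prefix in which the walk is pinned to the growing component, followed by a suffix of free transitions. Up to that point the argument is sound. The gap is in how you decide the free suffix. Your claim that the greedy choice ``weakly dominates every alternative'' because it ``yields a strictly larger visited set'' is false: the greedy and alternative strategies leave \emph{different} (not nested) sets of unvisited vertices, and a smaller leftover set need not be coverable by the remaining layers while a larger one is. Concretely, take $V=\{1,\dots,8\}$, $s=1$, and three layers $\{1,2\}\,|\,\{3,\dots,8\}$, then $\{1,3,4,5,6\}\,|\,\{2,7,8\}$, then $\{3,4,5,6,7\}\,|\,\{1,2,8\}$. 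Both transitions are free, and the walk visiting $\{1,2\}$, $\{2,7,8\}$, $\{3,4,5,6,7\}$ explores everything; but greedy takes $\{1,3,4,5,6\}$ in step~2 (four new vertices versus two) and is then stuck with $\{7,8\}$ split across the two components of step~3. Your algorithm would wrongly output no on this yes-instance.

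What the paper does instead, and what your proposal is missing, is the observation that once Lemma~\ref{lem:logfree} has been used to answer yes whenever there are more than $1+\log_2 n$ free transitions, the surviving instances have so few free transitions that one can afford \emph{exhaustive enumeration}: there are at most $2^{1+\log_2 n}=O(n)$ candidate walks through the free suffix, and each is checked in $O(n\log n)$ time, which is exactly where the $n^2\log n$ term in the stated bound comes from. (Your runtime analysis, which lands at $O(Ln+n\log n)$ and notes that the stated bound ``comfortably absorbs'' your costs, is a symptom of the missing enumeration step.) Your reachability bookkeeping and the forced-structure argument for the restricted prefix are fine; replacing the greedy phase by this bounded enumeration repairs the proof, and indeed you yourself flagged the necessity direction of the greedy phase as the main obstacle --- it is in fact unprovable because it is false.
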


\begin{proof}
In $O(Ln)$ time, we can check whether there is a step
in which there is a single component (in that case,
we output ``yes'' and terminate). In the same time bound,
we also preprocess the graph to ensure that no two consecutive
steps have the same partition and determine for each transition
whether it is free or restricted.

If a restricted transition follows a free transition,
we can output ``yes'' by Lemma~\ref{lem:freerestall}.
Otherwise, there must be an initial (possibly empty)
sequence $\mathcal{R}$ of restricted transitions, followed by a (possibly empty)
sequence $\mathcal{F}$ of free transitions.

If the start vertex $s$ is in the shrinking component
in one of the restricted transitions $\mathcal{R}$, then
we can visit all vertices of the graph by Lemma~\ref{lem:shrinkrest},
so we output ``yes''.
Otherwise, the start vertex $s$ must be
in the growing component in
all the restricted transitions $\mathcal{R}$. In this case,
it is impossible to leave that component. No decision
needs to be made during $\mathcal{R}$, and the walk
must visit the component containing $s$ in the first
time step of the first free transition.

If the number of free transitions in $\mathcal{S}$ is greater than $1+\log_2 n$,
the answer is ``yes'' by Lemma~\ref{lem:logfree}.
Otherwise, there are at most $1+\log_2 n$ free transitions.
Then, all possible choices for the next component to visit during
each of the at most $1+\log_2 n$ free transitions can be enumerated
in $O(2^{1+\log_2 n})=O(n)$ time. Furthermore, for each of these possibilities,
one can check in $O(n\log n)$ time whether the corresponding walk
visits all vertices of the graph.
\end{proof}

\begin{corollary}
\label{cor:gammaTwo}%
For each of the problems \textsc{$k$-fixed NS-TEXP}, \textsc{$k$-arbitrary NS-TEXP},
and \textsc{Set NS-TEXP},
there is an algorithm that solves instances with $\gamma=2$ in $O(Ln+n^2\log n)$ time.
\end{corollary}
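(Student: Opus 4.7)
The plan is to run the algorithm of Theorem~\ref{thm:gammaTwo} essentially verbatim, replacing only the ``visits all vertices'' test in its final enumeration step by the appropriate target condition of each variant. The key observation is that every ``yes'' branch in the proof of Theorem~\ref{thm:gammaTwo} exhibits a non-strict temporal walk that visits \emph{every} vertex of $V(\mathcal{G})$, and such a walk automatically satisfies the target condition of any of the three variants: it contains all of $X$ (for \textsc{$k$-fixed NS-TEXP}), at least $k$ distinct vertices (for \textsc{$k$-arbitrary NS-TEXP}), and at least one vertex of each $X_i\in\mathcal{X}$ (for \textsc{Set NS-TEXP}). Hence, whenever the algorithm of Theorem~\ref{thm:gammaTwo} outputs ``yes'' via a single-component step, or via Lemma~\ref{lem:shrinkrest}, Lemma~\ref{lem:freerestall}, or Lemma~\ref{lem:logfree}, we immediately output ``yes'' for the variant under consideration; these tests all fit within the $O(Ln)$ preprocessing budget, unchanged from Theorem~\ref{thm:gammaTwo}.

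The only case requiring new work is the enumeration case, in which $s$ lies in the growing component at every restricted transition of $\mathcal{R}$ and $\mathcal{F}$ contains $f\le 1+\log_2 n$ free transitions. The components containing $s$ during $\mathcal{R}$ form a strictly increasing chain (by Lemma~\ref{lem:freeorrest} together with the hypothesis that $s$ is never in a shrinking component of $\mathcal{R}$), so the walk is forced throughout $\mathcal{R}$ and its visited set is the final, largest growing component $C_R^*$, which we compute in $O(n)$ time. The $2^f=O(n)$ binary choices made during $\mathcal{F}$ index the candidate walks, and for each such walk we assemble the visited vertex set $V_W$ as $C_R^*$ augmented with the at most $1+\log_2 n$ components selected during $\mathcal{F}$, represented as a bit vector of length $n$, in $O(n\log n)$ time per walk.

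Given $V_W$, the target condition is checked as follows. For \textsc{$k$-fixed NS-TEXP} we test $X\subseteq V_W$ and for \textsc{$k$-arbitrary NS-TEXP} we test $|V_W|\ge k$, each in $O(n)$ time per walk. For \textsc{Set NS-TEXP} we precompute, for every vertex $v$, the list $M_v=\{i : v\in X_i\}$; then, for each $V_W$, we iterate over $v\in V_W$, mark every index in $M_v$ as ``hit'', and declare the walk good iff every index in $[m]$ is hit. We output ``yes'' iff some candidate walk satisfies the target condition. All three variants then run within the claimed $O(Ln+n^2\log n)$ bound (with the size of the set family absorbed into the problem's input size for the Set variant). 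The main thing to check is the correctness of the enumeration step, namely that every non-strict temporal walk starting at $s$ in the forced scenario is captured by one of the $O(n)$ enumerated candidates; this is exactly the reasoning already used in the proof of Theorem~\ref{thm:gammaTwo}, so no new combinatorial argument is required.
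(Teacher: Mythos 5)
Your proposal matches the paper's proof essentially verbatim: the paper likewise observes that every ``yes'' branch of Theorem~\ref{thm:gammaTwo} yields a walk visiting all of $V(\mathcal{G})$ (hence a yes-instance for all three variants), and otherwise enumerates the $O(n)$ candidate walks determined by the at most $1+\log_2 n$ free transitions and checks each against the variant's target condition. Your extra detail on how each condition is tested (and the caveat about the set family's size for \textsc{Set NS-TEXP}) only makes explicit what the paper leaves implicit.
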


\begin{proof}
First, assume that
there is a step with a single component, or that a restricted transition follows
a free transition, or that the vertex $s$ is ever contained in the shrinking
component of a restricted transition, or that the number of free transitions
is greater than $1+\log_2 n$. In all these cases, as argued in the
proof of Theorem~\ref{thm:gammaTwo}, all vertices of the input graph
can be visited, and hence the given instance is a yes-instance also of the three
problem variants under consideration here.

Now, assume that the temporal graph consists of an initial (possibly empty)
sequence $\mathcal{R}$ of restricted transitions such that $s$ is always
contained in the growing component, followed by 
a sequence $\mathcal{F}$ of at most $1+\log_2 n$ free transitions.
Then there are at most $2^{1+\log_2 n}=O(n)$ possible non-strict
temporal walks in the graph, and we can simply enumerate them
all and check for each of them in $O(n \log n)$ time whether it is
a solution to the given variant of \textsc{NS-TEXP}.
\end{proof}

\subsection{An \textup{\textsf{FPT}} algorithm for \textsc{NS-TEXP} with parameter \texorpdfstring{$L$}{L}}
\label{ss:NStexpFPT}
We now consider $\textsc{NS-TEXP}$ parameterized by the lifetime $L$ of the input temporal graph $\mathcal{G}$.
Let an instance of $\textsc{NS-TEXP}$ be given as a tuple $(\mathcal{G},s,L)$. We prove that $\textsc{NS-TEXP}$ is
in $\textsf{FPT}$ for parameter $L$ by specifying a bounded search tree-based \textsf{FPT} algorithm.

Let $\mathcal{G} = \langle G_1,\ldots,G_L\rangle$ be some non-strict temporal graph. Throughout this section we let 
$\mathcal{C}(\mathcal{G}) := \bigcup_{t \in [L]} G_t$, i.e., $\mathcal{C}(\mathcal{G})$ is the set of all components belonging to some layer of $\mathcal{G}$.
We implicitly assume that each component $C\in \mathcal{C}(\mathcal{G})$ is \emph{associated} with a unique layer $G_t$ of $\mathcal{G}$ in which it is contained.
If a component (seen as just a set of vertices) occurs in several layers, we thus treat these occurrences as different
elements of $\mathcal{C}(\mathcal{G})$ (or of any subset thereof) because they are associated with different layers.
If $Q$ is a set of components in $\mathcal{C}(\mathcal{G})$ that are associated with distinct layers (i.e.,
no two components in $Q$ are associated with the same layer $G_t$ of $\mathcal{G}$), then we say that
the components in $Q$ \emph{originate from unique layers of $\mathcal{G}$}.
For a set $Q$ of components that originate from unique layers of $\mathcal{G}$,
we let $D(Q) := \bigcup_{C \in Q} C$ be the union of the vertex sets of the components in~$Q$.
For any such set $Q$, we also let $T(Q) = \{t \in [L] : \text{there is a }C \in Q \text{ associated with layer } G_t\}$.

Within the following, we assume that $\mathcal{G}$ admits a non-strict exploration schedule~$W$.
\begin{observation}\label{obs:unvisit}
	Let $Q$ ($|Q| \in [0,L-1]$) be a subset of the components visited by the exploration schedule~$W$. Then there exists $C \in \mathcal{C}(\mathcal{G}) - Q$ with $C \in G_{t}$ ($t \in [L] - T(Q)$) such that $|C - D(Q)| \geq (n-|D(Q)|)/(L-|T(Q)|)$.
\end{observation}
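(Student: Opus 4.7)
The plan is to apply a straightforward averaging (pigeonhole) argument to the components visited by the hypothesised exploration schedule $W$ that do not lie in $Q$. First I would fix notation. Since $W$ is a non-strict exploration schedule, by Definition~\ref{nstw} it visits exactly one component per timestep, starting at timestep~$1$, so we may write the components visited by $W$ as $C^W_1,\ldots,C^W_l$ with $C^W_i$ associated with layer $G_i$ for some $l\le L$, and by definition of an exploration schedule $\bigcup_{i\in[l]} C^W_i = V(\mathcal{G})$. Because these components originate from distinct layers, any subset $Q\subseteq\{C^W_1,\ldots,C^W_l\}$ satisfies $|T(Q)|=|Q|$, and the hypothesis $|Q|\le L-1$ then guarantees $L-|T(Q)|\ge 1$ so that the inequality in the statement is well-defined.

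Next I would let $Q' := \{C^W_1,\ldots,C^W_l\}\setminus Q$. Every component in $Q'$ is associated with some layer $G_t$ with $t\in [l]-T(Q)\subseteq [L]-T(Q)$, and $|Q'|=l-|Q|\le L-|T(Q)|$. Because $W$ visits every vertex of $\mathcal{G}$, the components in $Q\cup Q'$ together cover $V(\mathcal{G})$, so
\[
\bigcup_{C\in Q'} (C - D(Q)) \;\supseteq\; V(\mathcal{G}) - D(Q),
\]
which yields
\[
\sum_{C\in Q'} |C - D(Q)| \;\geq\; n - |D(Q)|.
\]

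Finally, a standard averaging argument applied to the at most $L-|T(Q)|$ summands above produces some $C\in Q'\subseteq \mathcal{C}(\mathcal{G})-Q$ with $|C-D(Q)|\ge (n-|D(Q)|)/(L-|T(Q)|)$, and this $C$ is associated with some $G_t$ with $t\in[L]-T(Q)$, as required. There is really no obstacle here; the only point requiring care is the bookkeeping convention that components are associated with specific layers, so that the components visited by $W$ in distinct timesteps are regarded as distinct elements of $\mathcal{C}(\mathcal{G})$ and thus $|T(Q)|=|Q|$ holds for any subset $Q$ of the visited components.
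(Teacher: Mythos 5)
Your proof is correct and uses essentially the same averaging/pigeonhole argument as the paper, which states it contrapositively: if every visited component outside $Q$ had fewer than $(n-|D(Q)|)/(L-|T(Q)|)$ unvisited vertices, $W$ would cover strictly fewer than $n$ vertices. Your direct version, with the explicit set $Q'$ and the summation, is just a slightly more detailed writeup of the same counting.
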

Observation~\ref{obs:unvisit} follows since, otherwise, $W$ visits at most $L-|T(Q)|$ components $C \in \mathcal{C}(\mathcal{G})-Q$ that each contain $|C - D(Q)| < (n-|D(Q)|)/(L-|T(Q)|)$ of the vertices $v \notin D(Q)$, and so the total number of vertices visited by $W$ is strictly less than~$|D(Q)| +  (L-|T(Q)|)\cdot(n - |D(Q)|)/(L-|T(Q)|) = n$, a contradiction.

We briefly outline the main idea of our \textsf{FPT} result: We use a search tree algorithm that maintains a set $Q$ of
components that a potential exploration schedule could visit, starting with the empty set. 
Then the algorithm repeatedly tries all possibilities for adding a component (from some so far untouched layer)
that contains at least $(n-|D(Q)|)/(L-|T(Q)|)$ unvisited vertices (whose existence is guaranteed by
Observation~\ref{obs:unvisit} if there exists an exploration schedule).
It is clear that the search tree has depth~$L$, and the main further ingredient is an argument
showing that the number of candidates for the component to be added is bounded by a function of $L$,
namely, by $(L-|T(Q)|)^2$: This is because each of the $L-|T(Q)|$ untouched layers
can contain at most $L-|T(Q)|$ components that each contain at least $(n-|D(Q)|)/(L-|T(Q)|)$ unvisited vertices.
We now proceed to describe the details of the algorithm and its analysis.
\begin{lemma}\label{lem:compreach}
Let $\mathcal{G} = \langle G_1,\ldots,G_L\rangle$ be an arbitrary order-$n$ non-strict temporal graph. Then, for components $C_{t_1,j_1} \in G_{t_1}$ and $C_{t_2,j_2} \in G_{t_2}$ (with $1 \leq t_1 \leq t_2 \leq L$) one can decide, in $O((t_2-t_1+1)n)$ time, whether there exists a non-strict temporal walk beginning at any vertex contained in $C_{t_1,j_1}$ in timestep~$t_1$ and finishing at $C_{t_2,j_2}$ in timestep~$t_2$.
\end{lemma}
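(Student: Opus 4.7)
The plan is to carry out a layer-by-layer forward reachability computation, very much in the spirit of the algorithm from Lemma~\ref{lem:nonstrictsp}, but starting from an entire component rather than from a single vertex. I would maintain a Boolean label $r(v)$ for each $v\in V(\mathcal{G})$ indicating whether $v$ lies in some component of the current layer that is reachable by a non-strict temporal walk starting in $C_{t_1,j_1}$ at time~$t_1$. Initialise $r(v)=\True$ for every $v\in C_{t_1,j_1}$ and $r(v)=\False$ otherwise (this takes $O(n)$ time).

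Then, for each $t$ from $t_1+1$ up to $t_2$, I would update the labels as follows. Iterate over the components $C$ of layer $G_t$, and check whether $C$ contains at least one vertex currently marked \True; mark such a component as \emph{active}. Once all components of $G_t$ have been classified, build the new labelling by setting $r(v)=\True$ exactly for the vertices lying in some active component, and $r(v)=\False$ for all others. Since the components of $G_t$ partition~$V$, this whole pass takes $O(n)$ time. After processing step $t_2$, I output \True if and only if $r(v)=\True$ for some (equivalently, every) vertex $v\in C_{t_2,j_2}$.

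Correctness follows by an easy induction on $t$: after the step-$t$ update, $r(v)=\True$ iff there is a non-strict temporal walk starting in $C_{t_1,j_1}$ at time $t_1$ and ending at the (unique) component of $G_t$ containing~$v$. For the inductive step, a walk ending at a component $C$ of $G_t$ exists iff $C$ shares a vertex with some component of $G_{t-1}$ that is itself reachable, which is precisely the activation condition; the forward direction is the definition of a non-strict walk concatenation (using that intersecting consecutive components can be chained, as per Definition~\ref{nstw}), and the reverse direction uses that from any vertex $w$ of an active component $C$ of $G_t$ we can reach any other vertex of $C$ in the same step.

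The running time bound is immediate: the initialisation costs $O(n)$, each of the $t_2-t_1$ update steps costs $O(n)$ because the partition representation of each layer has total size~$n$, and the final check adds only $O(|C_{t_2,j_2}|)=O(n)$. The total is $O((t_2-t_1+1)n)$, as claimed. There is no serious obstacle here; the only care needed is to recognise that because each $G_t$ is a partition, activating components and rebuilding $r$ can be done in $O(n)$ per layer rather than needing to look at edges individually.
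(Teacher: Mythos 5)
Your proposal is correct and follows essentially the same approach as the paper's proof: a forward, layer-by-layer reachability sweep from $C_{t_1,j_1}$, marking in each layer the components that intersect the currently reachable vertex set, with $O(n)$ work per layer since each layer is a partition of~$V$. The only cosmetic difference is bookkeeping (the paper accumulates a set of reachable components and takes the union of their vertex sets, while you maintain a per-vertex Boolean label refreshed at each layer), which does not change the argument or the running time.
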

\begin{proof}
	For any $v \in V(\mathcal{G})$ and $t \in [t_1,t_2]$, let $c(v,t)$ denote the component $C_{t,j}$ such that $v \in C_{t,j}$ during timestep $t$. First, precompute the values $c(v,t)$ by, for every $t \in [t_1,t_2]$, scanning each component $C \in G_t$ and setting $c(v,t) = C$ if and only if $v \in C$. Next, let $X_{t_1} = \{C_{t_1,j_1}\}$ and then consider the timesteps $t \in [t_1+1,t_2]$ in increasing order, constructing at each timestep $t$ the set ${X_t = X_{t-1} \cup \{c(v,t) : v \in \bigcup_{C \in X_{t-1}} C\}}$. Finally, check whether $C_{t_2,j_2} \in X_{t_2}$, returning \textsf{yes} if so and \textsf{no} otherwise.
	
	The correctness of the algorithm is not hard to see. To see that the claimed running time of $O((t_2-t_1+1)n)$ holds, note first that precomputing the values $c(v,t)$ for any $v \in V(\mathcal{G})$ and any $t \in [t_1,t_2]$ requires $O((t_2-t_1+1)n)$ time since, in each timestep $t \in [t_1,t_2]$, we simply iterate over the vertices (of which there are always $n$ in total) contained in each component $C \in G_t$ .  Then, to compute $X_t$ for each $t \in [t_1+1,t_2]$, we add $c(v,t)$ (which can be evaluated in $O(1)$ time due to our preprocessing step) to $X_t$ for each vertex $v \in \bigcup_{C \in X_{t-1}} C$, of which there can be at most $n$. This second step of the algorithm can clearly also be executed in $O((t_2-t_1+1)n)$ time, and the lemma follows.
\end{proof}
Let $Q$ 
be a set of components originating from unique layers of $\mathcal{G}$, and let $W^?_\mathcal{G}(s, Q) = \textsf{yes}$ if and only if there exists a non-strict temporal walk in $\mathcal{G}$ that starts at $s \in V(\mathcal{G})$ in timestep $1$ and visits at least the components contained in $Q$, and \textsf{no} otherwise.
\begin{lemma}\label{lem:wexpcorr}
	For any order-$n$ non-strict temporal graph $\mathcal{G} = \langle G_1,\ldots,G_L\rangle$, any $s \in V(\mathcal{G})$, and any set $Q$ of components originating from unique layers of $\mathcal{G}$, $W^?_\mathcal{G}(s, Q)$ can be computed in $O(Ln)$ time.
\end{lemma}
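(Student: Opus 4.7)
The plan is to reduce the question to at most $L$ consecutive reachability queries, each of which can be answered by invoking Lemma~\ref{lem:compreach}.

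First, I exploit the hypothesis that the components in $Q$ originate from unique layers: enumerate them as $C_1,\ldots,C_k$ with $C_i$ associated with layer $G_{t_i}$ and $t_1 < t_2 < \cdots < t_k$; additionally let $C_0$ be the component of $G_1$ containing $s$ and set $t_0 := 1$. I then claim that $W^?_\mathcal{G}(s,Q) = \textsf{yes}$ if and only if, for every $i \in [k]$, there is a non-strict temporal walk that starts at some vertex of $C_{i-1}$ at time $t_{i-1}$ and ends at $C_i$ at time $t_i$.

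The forward direction is immediate: cut any witness exploration walk at the timesteps $t_1,\ldots,t_k$ to obtain the $k$ subwalks. The converse, which is the single point really worth verifying, follows by concatenation. The $i$-th witness segment ends at some vertex of $C_i$ at time $t_i$, and the $(i{+}1)$-st may begin at any vertex of $C_i$ at that same time, so the two endpoints can be joined by movement inside $C_i$ during step $t_i$ (permissible because non-strict walks may traverse arbitrarily many edges per timestep). The boundary case $t_1 = 1$ is handled correctly, since Lemma~\ref{lem:compreach} with $t_{i-1} = t_i$ reduces to checking that the two components coincide, which is exactly the condition $s \in C_1$.

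Finally, I invoke Lemma~\ref{lem:compreach} on each of the $k$ queries in turn. The $i$-th query costs $O((t_i - t_{i-1} + 1)n)$, so the total time is
\[
\sum_{i=1}^{k} O\bigl((t_i - t_{i-1} + 1)n\bigr) \;=\; O\bigl((t_k - t_0 + k)\,n\bigr) \;=\; O(Ln),
\]
since $t_k \le L$ and $k \le L$ (the latter because $Q$ contains at most one component per layer). There is no significant obstacle beyond the concatenation argument above; the rest is bookkeeping.
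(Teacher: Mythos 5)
Your proof is correct and follows essentially the same route as the paper: order the components of $Q$ by their (unique) layers, prepend the component of $s$ in layer~$1$, run the reachability check of Lemma~\ref{lem:compreach} on each consecutive pair, and sum the costs telescopically to get $O(Ln)$. The concatenation argument and the handling of the degenerate case $t_1=1$ match the paper's reasoning, so there is nothing to add.
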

\begin{proof}
	Let $C_{s_1},C_{s_2},\ldots,C_{s_{|Q|}}$ be an an index-ordered sequence of the components in $Q$, with the indices $s_i \in [L]$ satisfying $C_{s_i} \in G_{s_i}$ (for all $i \in [|Q|]$) and $s_{i} < s_{i+1}$ (for all $i \in [|Q|-1]$). Let $C_{s} \in G_1$ be the unique component in layer $1$ such that $s \in C_{s}$ (note that we may have $C_{s_1} = C_{s}$). Now, apply the algorithm of Lemma~\ref{lem:compreach} with $C_{t_1,j_1} = C_s$ and $C_{t_2,j_2} = C_{s_1}$, and then with $C_{t_1,j_1} = C_{s_i}$ and $C_{t_2,j_2} = C_{s_{i+1}}$ for all $i \in [|Q|-1]$. If the return value of any application of the algorithm of Lemma~\ref{lem:compreach} is \textsf{no}, then we return $W^?_\mathcal{G}(s,Q) = \textsf{no}$; otherwise we return $W^?_\mathcal{G}(s,Q) = \textsf{yes}$. This concludes the algorithm's description.
	
Since each component $C_{s_i}$ can only be visited in timestep $s_i$ it is clear that
any walk that visits all components of $Q$ must visit them in the specified order.
The algorithm sets $W^?_\mathcal{G}(s,Q) = \textsf{yes}$ if the components of $Q$ can be visited in the specified order.
On the other hand, if the algorithm of Lemma~\ref{lem:compreach} returns \textsf{no} for at least one pair of input components $C_{s_i},C_{s_{i+1}}$ (or $C_s,C_{s_1}$), then it must be that the components cannot be visited in this order, and thus the algorithm sets $W^?_\mathcal{G}(s,Q) = \textsf{no}$. Thus, the algorithm's correctness follows from the correctness of Lemma~\ref{lem:compreach}'s algorithm. To see that the running-time of the algorithm is bounded by $O(Ln)$, recall that each application of Lemma~\ref{lem:compreach}'s algorithm to start/finish components $C_{s_i}$ and $C_{s_{i+1}}$ takes $c(s_{i+1}-s_i+1)n$ time (for a constant $c$ hidden in the bound of Lemma~\ref{lem:compreach}). Thus the total amount of time spent over all applications is $c(s_1 - 1 + 1)n + \sum_{i \in [|Q|-1]} c(s_{i+1}-s_i+1)n = cn(s_{|Q|} + |Q| - 1) \leq cn(2L - 1) = O(Ln)$, where the last inequality holds since $|Q|,s_{|Q|} \leq L$.
\end{proof}
Now, let $\mathcal{G}$ be some input graph, and let $Q$ be some set of components originating from unique layers of $\mathcal{G}$. For any $s \in V(\mathcal{G})$, the recursive function $g(\mathcal{G},s,Q)$ (Algorithm~\ref{alg:computeg}) returns \textsf{yes} if and only if there exists a non-strict exploration schedule of $\mathcal{G}$ that starts at $s$ and visits (at least) the components contained in $Q$, and returns \textsf{no} otherwise. We prove the correctness of Algorithm~\ref{alg:computeg} in Lemma~\ref{lem:computegcorrect}.
\begin{algorithm}
  \DontPrintSemicolon
  \SetAlgoLined
  \caption{Recursive function $g(\mathcal{G},s,Q)$.}\label{alg:computeg}
  \BlankLine
  	\eIf{$|Q| = L$ or $|D(Q)| = n$}
  	{
  		\lIf{$|D(Q)| = n$}{\Return{$W^?_\mathcal{G}(s,Q)$}}
  		\lElse{\Return{\textsf{no}}}
  	}
  	{
  	  	$C' \gets \{C \in \mathcal{C}(\mathcal{G}) - Q : |C - D(Q)| \geq (n-|D(Q)|)/(L-|T(Q)|)\}$\;
  	  	$C^* \gets C' - \{C \in C' : C \in G_t, t \in T(Q)\}$.\\
  	  	\lIf{$|C^*| = 0$}
  	  	{
  	  		\Return{\textsf{no}}
  	  	}
  	  	\For{$C \in C^*$}
  	  	{
  	  		\lIf{$g(\mathcal{G},s,Q\cup \{C\}) = \textsf{yes}$}{\Return{\textsf{yes}}}
  		}
  		\Return{\textsf{no}}
  	}
\end{algorithm}
\begin{lemma}\label{lem:computegcorrect}
	For any non-strict temporal graph $\mathcal{G}$, any $s \in V(\mathcal{G})$, and any set $Q$ (with $|Q| \in [0,L]$) containing components originating from unique layers of $\mathcal{G}$, Algorithm~\ref{alg:computeg} correctly computes $g(\mathcal{G},s,Q)$.
\end{lemma}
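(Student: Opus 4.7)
The plan is to prove correctness by induction on $L - |Q|$, which is a valid measure because every recursive call is of the form $g(\mathcal{G},s,Q\cup\{C\})$ for some $C \in \mathcal{C}(\mathcal{G})-Q$ that is associated with a layer $G_t$ with $t \notin T(Q)$, so $|Q\cup\{C\}|=|Q|+1$ and the components of $Q\cup\{C\}$ still originate from unique layers. There are two implications to establish: (a) if the algorithm returns \textsf{yes}, then a non-strict exploration schedule starting at $s$ that visits every component in $Q$ exists; and (b) if such an exploration schedule exists, the algorithm returns \textsf{yes}.

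The base case is when $|Q|=L$ or $|D(Q)|=n$. If $|D(Q)|=n$, any non-strict walk that visits all components of $Q$ already visits every vertex of $\mathcal{G}$, so such a walk is an exploration schedule if and only if it exists at all, and hence the answer coincides with $W^?_\mathcal{G}(s,Q)$, which is computed correctly by Lemma~\ref{lem:wexpcorr}. If instead $|Q|=L$ and $|D(Q)|<n$, then because a non-strict walk visits exactly one component per layer and the components of $Q$ already occupy all $L$ layers, any walk visiting $Q$ is forced to stop with only $|D(Q)|<n$ vertices covered, so no exploration schedule exists and returning \textsf{no} is correct.

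For the inductive step ($|Q|<L$ and $|D(Q)|<n$), direction (a) is immediate: if the for-loop finds $C \in C^*$ with $g(\mathcal{G},s,Q\cup\{C\})=\textsf{yes}$, the inductive hypothesis gives an exploration schedule visiting $Q\cup\{C\}\supseteq Q$. For direction (b), suppose $W$ is an exploration schedule starting at $s$ whose set of visited components contains~$Q$. Then Observation~\ref{obs:unvisit}, applied to this $Q$ (which is a subset of the components visited by $W$), produces a component $C\in \mathcal{C}(\mathcal{G})-Q$ that is visited by $W$, is associated with some layer $G_t$ with $t\in [L]-T(Q)$, and satisfies $|C-D(Q)|\geq (n-|D(Q)|)/(L-|T(Q)|)$. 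By its first two properties $C$ lies in $C^*$, so the for-loop in Algorithm~\ref{alg:computeg} will encounter it; since $W$ witnesses that $Q\cup\{C\}$ (whose components still originate from unique layers) can be extended to an exploration schedule starting at $s$, the inductive hypothesis yields $g(\mathcal{G},s,Q\cup\{C\})=\textsf{yes}$, and the algorithm returns \textsf{yes}.

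The only subtle point, and thus the main thing to be careful about, is the interface with Observation~\ref{obs:unvisit} in direction~(b): the observation requires $Q$ to be a subset of the components visited by the hypothetical $W$, and the component $C$ it produces is itself visited by $W$. This is exactly what licenses the induction step, since it guarantees that the extended set $Q\cup\{C\}$ is again a subset of the components visited by the same $W$. The restriction from $C'$ to $C^*$ in the algorithm (discarding components that share a layer with some component already in $Q$) is also essential, as it maintains the invariant that $Q$ originates from unique layers, which in turn is what both Lemma~\ref{lem:wexpcorr} and the base-case reasoning rely on.
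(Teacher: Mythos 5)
Your proof is correct and follows essentially the same route as the paper's: reverse induction on $|Q|$ (equivalently, induction on $L-|Q|$), the same two base cases, and the same use of Observation~\ref{obs:unvisit} to guarantee that any exploration schedule visiting $Q$ must visit some component in $C^*$, which drives the inductive step. The subtlety you flag---that the component produced by Observation~\ref{obs:unvisit} is itself visited by $W$, so the invariant is preserved---is exactly the point the paper's argument relies on as well.
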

\begin{proof}
	We first show that $g(\mathcal{G},s,Q)$ is correct in the base case, i.e., when $|Q| = L$ or $|D(Q)| = n$. If we have $|D(Q)| = n$, then any non-strict temporal walk that starts at $s$ in timestep $1$ and visits all components in $Q$ is an exploration schedule. Thus, the correctness of line $2$ follows from the definition of the return value $W^?_\mathcal{G}(s,Q)$ (which can be computed using Lemma~\ref{lem:wexpcorr}). If $|Q| = L$ and $|D(Q)| < n$, i.e., we have reached line $3$, then there must exist no exploration schedule that visits each of the components in $Q$, since any non-strict temporal walk in a temporal graph with lifetime $L$ can visit at most $L$ components, but at least one additional component $C \notin Q$ needs to be visited to cover at least one vertex $v \notin D(Q)$ -- thus it is correct to return $\textsf{no}$ in this case.

Otherwise, we have $|Q| < L$ and $|D(Q)| < n$, and are in the recursive case. Then, by Observation~\ref{obs:unvisit}, any non-strict exploration schedule that visits all components in $Q$ must visit at least one other component $C \in \mathcal{C}(\mathcal{G}) - Q$ such that $|C - D(Q)| \geq (n-|D(Q)|)/(L-|T(Q)|)$. Line $5$ computes the set $C'$ consisting of all such components, line 6 forms from $C'$ the set $C^*$ by removing from $C'$ any components that originate from layers $G_t$ such that $C \in G_t$ for some $C \in Q$ (since only one component can be visited in each timestep, and thus we want $Q$ to be a set of components originating from unique layers of $\mathcal{G}$).
We remark that a more efficient implementation could skip layers $G_t$ with $t\in T(Q)$ already when constructing $C'$
in line~5, but the asymptotic running-time of the overall algorithm would not be affected by this change.
The correctness of line 7 follows from Observation~\ref{obs:unvisit}. To complete the proof, we claim that the value $\textsf{yes}$ is returned by line $9$ if and only if there exists a non-strict temporal exploration schedule starting at $s$ that visits all the components contained in $Q$; we proceed by reverse induction on $|Q|$. Assume first that the return value of $g(\mathcal{G},s,Q')$ is correct for any $Q'$ with $|Q'| = k$ ($k \in [L]$) and let $|Q| = k-1$. Now assume that, during the execution of $g(\mathcal{G},s,Q)$, line $9$ returns \textsf{yes}; it follows that $g(\mathcal{G},s,Q') = \textsf{yes}$ for some $Q' = Q \cup {C}$ with $C \in C^{*}$ and thus it follows from the induction hypothesis that there exists a non-strict temporal exploration schedule that starts at $s$ and visits all the components contained in~$Q$, as required. In the other direction, assume that there exists some non-strict exploration schedule $W$ that starts at $s$ in timestep $1$ and visits all the components in $Q$. Note that, since the execution has reached line 9, we surely have $|C^*| > 0$; since we also have $|Q| < L$ and $|D(Q)| < n$ it follows from Observation~\ref{obs:unvisit} that $W$ visits at least one additional component $C \in C^{*}$. Then, by the induction hypothesis, we must have $g(\mathcal{G},s,Q \cup \{C\}) = \textsf{yes}$; thus when the loop of lines 8--10 processes $C \in C^*$ the algorithm will return \textsf{yes} as required.
\end{proof}

\begin{theorem}\label{thm:nstexpl}
There is an algorithm that decides any instance $I = (\mathcal{G},s,L)$ of $\textsc{NS-TEXP}$
in $O(L(L!)^2n)$ time.
\end{theorem}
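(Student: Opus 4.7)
The plan is to invoke $g(\mathcal{G},s,\emptyset)$ on the given instance and return its answer. Correctness is essentially handed to us by Lemma~\ref{lem:computegcorrect}: with $Q=\emptyset$ we have $T(Q)=\emptyset$ and $D(Q)=\emptyset$, so the lemma says that the call returns \textsf{yes} precisely when $\mathcal{G}$ admits a non-strict exploration schedule starting at~$s$ (i.e., when $I$ is a yes-instance). So the remaining work is the running-time analysis.

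I would bound the work by bounding (i) the size of the recursion tree and (ii) the work done at each node. For (i), note that every recursive call increases $|Q|$ by one and $|Q|$ is capped at~$L$, so the depth of the recursion tree is at most~$L$. The key step is bounding the branching factor at a node with parameters $Q$, $|T(Q)|=\tau$. The loop of lines 8--10 iterates over the set $C^*$, which consists of components $C$ from layers $G_t$ with $t\notin T(Q)$ satisfying $|C-D(Q)|\ge (n-|D(Q)|)/(L-\tau)$. There are only $L-\tau$ such untouched layers, and within any single layer $G_t$, the components partition $V(\mathcal{G})$ and hence contain at most $n-|D(Q)|$ unvisited vertices in total; thus no more than $L-\tau$ of them can each contain at least $(n-|D(Q)|)/(L-\tau)$ unvisited vertices. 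This gives $|C^*|\le (L-\tau)^2$.

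Multiplying the branching factors along any root-to-leaf path then yields that the total number of recursion tree nodes is at most
\[
1+\sum_{i=1}^{L}\prod_{\tau=0}^{i-1}(L-\tau)^2
 \;=\; O\bigl((L!)^2\bigr).
\]
For (ii), at each internal node I would construct $C'$ and $C^*$ by scanning every component in every untouched layer, which is $O(Ln)$ time given the representation as lists of component vertex sets. At each leaf with $|D(Q)|=n$, the call to $W^?_{\mathcal{G}}(s,Q)$ costs $O(Ln)$ by Lemma~\ref{lem:wexpcorr}. So the per-node cost is $O(Ln)$, and combining with the node count gives the claimed overall bound $O(L(L!)^2 n)$.

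The main obstacle (and the only non-trivial ingredient) is the branching-factor argument, which relies critically on Observation~\ref{obs:unvisit}: restricting the branching to components that contain a \emph{large} fraction of the still-unvisited vertices is what forces $|C^*|\le (L-\tau)^2$ rather than something like $n\cdot L$. Everything else is a routine recursion-tree accounting, and a mild implementation note (that $Q$ is stored, say, as a list of (layer, component) pairs so that set operations and membership in $T(Q)$ are cheap) suffices to ensure the $O(Ln)$ per-node bound.
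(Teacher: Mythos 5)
Your proposal matches the paper's proof essentially verbatim: the same reduction to the call $g(\mathcal{G},s,\emptyset)$ with correctness delegated to Lemma~\ref{lem:computegcorrect}, the same branching-factor bound $|C^*|\le (L-|T(Q)|)^2$ derived from Observation~\ref{obs:unvisit} and the fact that components within a layer partition the vertex set, the same $O((L!)^2)$ node count, and the same $O(Ln)$ per-node cost via Lemma~\ref{lem:wexpcorr}. Nothing to add.
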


\begin{proof}
The algorithm simply returns the value of function call $g(\mathcal{G}, s, \emptyset)$ (Algorithm~\ref{alg:computeg}).

By Lemma~\ref{lem:computegcorrect}, $g(\mathcal{G},v,Q)$ returns $\textsf{yes}$ if and only if $\mathcal{G}$ admits a non-strict exploration schedule that starts at $v$ and visits at least the components contained in the set $Q$ (which contains $|Q| \in [0,L]$ components originating from unique layers of $\mathcal{G}$), and returns \textsf{no} otherwise. Thus the correctness of the above follows immediately.

In order to bound the running time of the above algorithm, it suffices to bound the running time of Algorithm~\ref{alg:computeg}, i.e., the  recursive function $g$. The initial call is $g(\mathcal{G},s,\emptyset)$, and each recursive call is of the form $g(\mathcal{G},s,Q)$ where $Q$ is a set of components with size one more than the input set of the parent call. Hence, line $1$ ensures that there are at most $L$ levels of recursion in total (not including the level containing the initial call). For a call at level $i \geq 0$, the set $C^{*}$ constructed in line 5 has size at most $(L-i)^2$, since at most $L-i$ components can cover at least $(n-|D(Q)|)/(L-i)$ of the vertices in $V(\mathcal{G}) - D(Q)$ during each of the $L-i$ steps $t \in [L] - T(Q)$. Thus each call at level $i \ge 0$ makes at most $(L-i)^2$ recursive calls. The tree of recursive calls thus has at most $(L!)^2$ nodes
at depth $L$, and hence $O((L!)^2)$ nodes in total. It
follows that the overall number of calls is bounded by $O((L!)^2)$. 

Next, note that if some level-$i$ call $g(\mathcal{G},s,Q)$ is such that $|Q| < L$ and $|D(Q)| < n$, then line 5 computes the set $C'$, which can be achieved in $O(Ln)$ time by, for each $t \in [L]$, scanning over the components $C \in G_t$ (which collectively contain $n$ vertices) and adding a component $C \in G_t$ to $C'$ if and only if $|C-D(Q)| \geq (n-|D(Q)|)/(L-i)$. (Note that we can maintain a map from $V$ to $\{0,1\}$ that records for each vertex $v$ whether $v\in D(Q)$,
and hence the value $|C-D(Q)|$ can be computed in $O(|C|)$ time.) To compute the set $C^*$ in line 6 we can follow a similar approach: for each $t \in [L]-T(Q)$ ($|[L]-T(Q)| = L-i$), add a component $C \in G_t$ to $C^*$ if and only if it satisfies $C \in C'$. This requires $O((L-i)n) = O(Ln)$ time, and thus lines 5--6 take $O(Ln)$ time in total. Additionally, the return value of each recursive call is checked by the foreach loop (line 9) of its parent call in $O(1)$ time -- this contributes an extra $O((L!)^2)$ time over all recursive calls. On the other hand, if a call $g(\mathcal{G},s,Q)$ is such that $|Q| = L$ or $|D(Q)| = n$, then line 2 computes $W^?_\mathcal{G}(s,Q)$ in $O(Ln)$ time using Lemma~\ref{lem:wexpcorr}. Thus in all cases the overall work per recursive call is $O(Ln)$, and the total amount of time spent before $g(\mathcal{G},s,\emptyset)$ is returned is $O((L!)^2)\cdot O(Ln) = O(L(L!)^2n)$, as claimed.
\end{proof}

We remark that the algorithm of Theorem~\ref{thm:nstexpl} can be adapted to
\textsc{$k$-fixed NS-TEXP} in a straightforward way: If we are only interested
in visiting the vertices in a given set $X$ with $|X|=k$, an observation analogous to
Observation~\ref{obs:unvisit} shows the existence of a component $C$ that contains
at least a $1/(L-|T(Q)|$ fraction of the unvisited vertices in~$X$, i.e.,
$|(C-D(Q))\cap X|\ge (k-|D(Q)\cap X|)/(L-|T(Q)|)$. In Algorithm~\ref{alg:computeg},
we only need to replace the condition $|D(Q)|=n$ in lines 1 and~2 by
$|D(Q)\cap X|=k$, and the selection criterion for components in line 5
by $|(C-D(Q))\cap X|\ge (k-|D(Q)\cap X|)/(L-|T(Q)|)$.

\begin{corollary}
\textsc{$k$-fixed NS-TEXP} with parameter $L$ is in \textup{\textsf{FPT}}.
\end{corollary}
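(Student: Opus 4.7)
The plan is to follow the adaptation that the authors have already sketched: modify Algorithm~\ref{alg:computeg} so that it tracks coverage of the target set $X$ rather than coverage of $V(\mathcal{G})$, and verify that the search-tree analysis of Theorem~\ref{thm:nstexpl} carries over with $n$ replaced by $k$ only where the correctness argument demands it.

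First I would establish the analogue of Observation~\ref{obs:unvisit}: if $W$ is a non-strict walk that visits all vertices of $X$ and $Q$ is a set of components with $|Q|\in[0,L-1]$ that $W$ visits, then there is a component $C\in\mathcal{C}(\mathcal{G})-Q$ associated with some layer $G_t$ with $t\in[L]-T(Q)$ such that
\[ |(C-D(Q))\cap X|\;\ge\;\frac{k-|D(Q)\cap X|}{L-|T(Q)|}. \]
This follows by the same averaging argument as in Observation~\ref{obs:unvisit}: $W$ still has at most $L-|T(Q)|$ remaining slots in which to cover the $k-|D(Q)\cap X|$ elements of $X$ that are not yet in $D(Q)$, so some slot must contribute at least the average.

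Next I would modify Algorithm~\ref{alg:computeg} exactly as indicated: replace the termination test $|D(Q)|=n$ in lines~1--2 by $|D(Q)\cap X|=k$, and replace the selection criterion in line~5 by $|(C-D(Q))\cap X|\ge(k-|D(Q)\cap X|)/(L-|T(Q)|)$. The correctness proof of Lemma~\ref{lem:computegcorrect} then transfers verbatim, using the new observation in place of Observation~\ref{obs:unvisit}: in the base case $|D(Q)\cap X|=k$ a walk visiting the components of $Q$ already visits all of $X$, so $W^?_\mathcal{G}(s,Q)$ (which is independent of $X$) decides the question; in the base case $|Q|=L$ with $|D(Q)\cap X|<k$ no schedule exists; and in the recursive case the new observation guarantees that some component of the modified set $C^\ast$ is visited by every solution extending~$Q$.

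For the running-time bound, the key point is that the branching factor remains $(L-i)^2$ at depth~$i$: in each of the $L-|T(Q)|$ untouched layers, the components form a partition of $V$, so their intersections with $X-D(Q)$ (a set of size $k-|D(Q)\cap X|$) partition $X-D(Q)$, and therefore at most $L-|T(Q)|$ of them can meet the threshold $(k-|D(Q)\cap X|)/(L-|T(Q)|)$. Consequently the recursion tree still has $O((L!)^2)$ nodes. The per-call work is $O(Ln)$, bounded as in Theorem~\ref{thm:nstexpl} (scanning layers once to form $C'$ and $C^\ast$, maintaining a bit-vector indicator for $D(Q)\cap X$ rather than $D(Q)$, and calling Lemma~\ref{lem:wexpcorr} in the base case). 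The total running time is therefore $O(L(L!)^2 n)$, which is \textsf{FPT} in~$L$. The only part that needs any real care is the claim that the branching factor stays $(L-i)^2$ once we measure ``mass'' in $X$ rather than in $V$, and this is exactly what the partition argument above gives.
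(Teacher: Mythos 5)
Your proposal is correct and matches the paper's own argument, which adapts Algorithm~\ref{alg:computeg} in exactly the way you describe: replace the termination test by $|D(Q)\cap X|=k$, replace the selection threshold by $|(C-D(Q))\cap X|\ge (k-|D(Q)\cap X|)/(L-|T(Q)|)$, and reuse the averaging and branching-factor arguments with mass measured in $X$ instead of $V$. Your partition argument for why the branching factor stays $(L-i)^2$ is the right justification and is the only point the paper leaves implicit.
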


\subsection{\texorpdfstring{$\textup{\textsf{W}}[2]$}{W[2]}-hardness of Set NS-TEXP for parameter \texorpdfstring{$L$}{L}}
\label{ss:NSsetWTwo}%
Our aim in this section is to show that the \textsc{Set NS-TEXP} problem is $\textsf{W}[2]$-hard 
when parameterized by the lifetime $L$ of the input graph. The reduction is from the well-known 
\textsc{Set Cover} problem with parameter $k$ -- the maximum number of sets allowed in a 
candidate solution. \textsc{Set Cover} is 
known to be $\textsf{W}[2]$-hard for this parameterization~\cite{BPS_16}.
\begin{definition}[\textsc{Set Cover}]
	An instance of \textsc{Set Cover} is given as a tuple $(U, \mathcal{S}, k)$, 
	where $U=\{a_1,\dots,a_n\}$ is the ground set and $\mathcal{S} = \{S_1,\dots,S_m\}$ is 
	a set of subsets $S_i \subseteq U$. The problem then asks whether or not there exists a 
	subset $\mathcal{S'} \subseteq \mathcal{S}$ of size at most $k$ such that, for 
	all $i \in [n]$, there exists an $S \in \mathcal{S}'$ such that $a_i \in S$.
\end{definition}
For any instance $I$ of \textsc{Set Cover} that we consider, we will w.l.o.g. assume that for each $i \in [n]$ we have $a_i \in S_j$ for some $j \in [m]$.
\begin{theorem}
	\label{thm:nssethard}%
	\textsc{Set NS-TEXP} parameterized by $L$ (the lifetime of the 
	input non-strict temporal graph) is $\textup{\textsf{W}}[2]$-hard.
\end{theorem}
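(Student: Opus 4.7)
The plan is to give a parameterized reduction from \textsc{Set Cover} (parameterized by the target cover size~$k$), which is $\textsf{W}[2]$-hard, to \textsc{Set NS-TEXP} parameterized by the lifetime~$L$. Given an instance $(U,\mathcal{S},k)$ with $U=\{a_1,\dots,a_n\}$ and $\mathcal{S}=\{S_1,\dots,S_m\}$, I will construct an instance $(\mathcal{G},s,\mathcal{X})$ of \textsc{Set NS-TEXP} in which the non-strict temporal graph $\mathcal{G}$ has lifetime $L=2k$; since $L$ depends only on~$k$, this is a valid parameterized reduction.

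The construction uses the vertex set $V=\{s\}\cup\{p_j : j\in[m]\}\cup\{v_j : j\in[m]\}$, where each $p_j$ serves as a portal enabling transitions and each $v_j$ represents the choice of set~$S_j$. The family is $\mathcal{X}=\{X_1,\dots,X_n\}$ with $X_i=\{v_j : a_i\in S_j\}$, so that hitting $X_i$ corresponds to choosing some set that contains~$a_i$. The partitions alternate between two types: in each odd step $2t'-1$ (a ``free'' step, for $t'\in[k]$), the partition is the component $A=\{s,p_1,\dots,p_m\}$ together with the singletons $\{v_j\}$ for $j\in[m]$; in each even step $2t'$ (a ``pick'' step), the partition is the singleton $\{s\}$ and the components $C_j=\{p_j,v_j\}$ for $j\in[m]$.

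For the forward direction, given a set cover $\{S_{j_1},\dots,S_{j_{k'}}\}$ with $k'\le k$, the walk that visits $A,C_{j_1},A,C_{j_2},\dots,A,C_{j_{k'}}$ in consecutive timesteps is a valid non-strict temporal walk starting at $s$: indeed $s\in A$ in step~$1$, and consecutive components share the portal $p_{j_t}$. This walk visits $v_{j_1},\dots,v_{j_{k'}}$ and hence hits every~$X_i$. For the reverse direction, let $W$ be any solution walk and let $P=\{S_j : W \text{ visits } C_j \text{ in some pick step}\}$. Since $W$ visits at most one component per step and there are only $k$ pick steps, $|P|\le k$. If $W$ hits $X_i$, then $W$ visits some $v_j$ with $a_i\in S_j$; but $v_j$ occurs only in $C_j$ (in pick steps) and as the singleton $\{v_j\}$ (in free steps), and in the latter case $W$ can enter $\{v_j\}$ only from a component containing $v_j$, which must be $C_j$ from an earlier pick step because the free-step component $A$ contains no $v_j$. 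Either way $S_j\in P$, so $P$ is a set cover of size at most~$k$.

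The main technical obstacle will be the analysis of ``deviating'' walks that leave the straightforward alternating pattern by entering singleton components $\{v_j\}$ during free steps. The critical design choice is to place only the portal vertices $p_j$ (and not the $v_j$'s) inside the free-step component~$A$: this guarantees that every singleton detour must originate from a prior visit to the corresponding $C_j$ in a pick step, so such detours cannot enlarge the effective pick set~$P$ beyond the $k$ picks the walk performs at pick steps.
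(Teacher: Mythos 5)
Your proof is correct and follows essentially the same route as the paper: a reduction from \textsc{Set Cover} with parameter $k$ to an instance of lifetime $L=2k$ that alternates a hub step (start vertex plus portals) with a pick step (one component per set, reachable through its portal), with the same analysis ruling out singleton detours. The only difference is cosmetic: the paper uses one vertex $y_{i,j}$ per element--set incidence and sets $X_i=\{y_{i,j}: a_i\in S_j\}$, whereas you collapse these into a single vertex $v_j$ per set with $X_i=\{v_j: a_i\in S_j\}$, a slightly smaller gadget that works equally well.
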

\begin{figure}
\centering
\resizebox{0.85\textwidth}{!}{\input{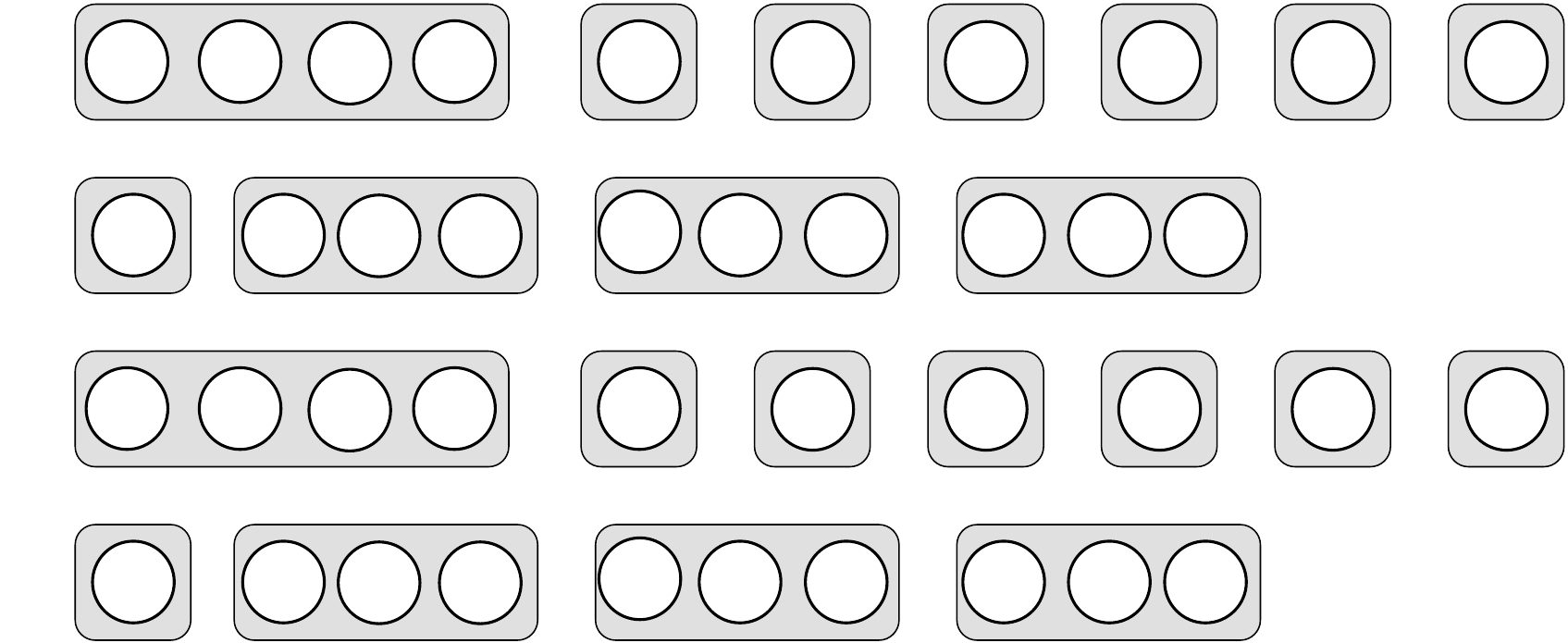_t}}

\caption{%
Instance of \textsc{Set NS-TEXP} constructed from the
instance of \textsc{Set Cover} with $k=2$ given by
$U=\{e,f,g,h\}$ and $\mathcal{S}=\{S_1,S_2,S_3\}$
with $S_1= \{e,f\}$, $S_2=\{f,h\}$, $S_3=\{e,g\}$.
The set $\mathcal{X}$ of vertex subsets
that must be visited is
$\{ \{e_1,e_3\}, \{f_1,f_2\}, \{g_3\}, \{h_2\} \}$.}
\label{fig:nssethard}%
\end{figure}%
\begin{proof}
	Let $I=(U = \{a_1,\dots,a_n\}, \mathcal{S} = \{S_1,\dots,S_m\}, k)$ be an arbitrary 
	instance of \textsc{Set Cover} parameterized by $k$. We construct a corresponding 
	instance $I' = (\mathcal{G},s,\mathcal{X})$ of $\textsc{Set NS-TEXP}$ as follows:
	Let $V(\mathcal{G}) = \{s\} \cup \{x_j : j \in [m]\} \cup \{y_{i,j} : j \in [m], a_i \in S_j\}$, and
	define $X_i = \{y_{i,j} \in V(\mathcal{G}) : j \in [m]\}$ ($i \in [n]$) and 
	$\mathcal{X} = \bigcup_{i \in [n]} \{X_i\}$. We set the lifetime
	$L$ of $\mathcal{G}$ to $L = 2k$
	and specify the components for each 
	timestep $t \in [2k]$ as follows: In all odd steps let one component be 
	$\{s\} \cup \{x_j: j \in [m]\}$ and let all other vertices belong to components of 
	size~$1$. In even steps, for each $j \in [m]$ let there be a component 
	$\{y_{i,j} \in V(\mathcal{G}): i \in [n]\} \cup \{x_j\}$ and let~$s$
	form a component of size~$1$.
An example of the construction is shown in Figure~\ref{fig:nssethard}.
(In the figure, for the sake of readability, the elements of $U$ are
denoted by $e,f,g,h$ instead of $a_1,a_2,a_3,a_4$ and
the elements of $X_2$ are denoted by $f_2,h_2$ instead
of $y_{2,2},y_{4,2}$, and similarly for $X_1$ and $X_3$.)
Since $|V(\mathcal{G})| \leq 1 + m + mn = O(mn)$, $|\bigcup_{i \in [n]} X_i| = O(mn)$ 
and $L = 2k$ we have that the size of instance $I'$ is $|I'| = O(kmn)$ and the parameter 
$L$ is bounded solely by a function of instance $I$'s parameter $k$, as required. To complete 
the proof, we argue that $I$ is a yes-instance if and only if $I'$ is a yes-instance:

$(\implies)$ Assume that $I$ is a yes-instance; then there exists a collection of sets 
$\mathcal{S}'  \subseteq \mathcal{S}$ of size $|\mathcal{S}'| = k' \leq k$ and, for 
all $i \in [n]$, there exists $S \in \mathcal{S}'$ with $a_i \in S$. Let $S_{j_1},S_{j_2},\dots,S_{j_{k'}}$ 
be an arbitrary ordering of the sets in $\mathcal{S}'$; note that $j_i \leq m$ for all 
$i \in [k']$. We construct a non-strict temporal walk $W$ in $\mathcal{G}$ as follows: 
Starting at vertex $s$, for every $l \in [1,k']$, during timestep $t = 2l-1$ visit all 
vertices in the current component then finish timestep $2l-1$ positioned at $x_{j_l}$. 
The component occupied during step $2l$ will be the one containing $x_{{j}_l}$ -- explore 
all vertices contained in that component and finish step $2l$ positioned at $x_{{j}_l}$. 
If $k' < k$, then spend the steps of the interval $[2k'+1,2k]$ positioned in an arbitrary 
component. We claim that $W$ visits at least one vertex in $X_i$ for all $i \in [n]$. To see 
this, first note that for every $i \in [n]$ there exists an $S_j \in \mathcal{S}'$ such that 
$a_i \in S_j$.  Hence, by our reduction, it follows that a vertex $y_{i,j}$ is contained 
in the component containing $x_j$ during timestep $2l$ for every $l \in [k]$ and, by 
its construction, $W$ visits the component containing $x_j$ (and thus visits $y_{i,j} \in X_i$) 
during timestep $2l^*$ for some $l^*$ such that $j_{l^*} = j$. Since this holds for 
all $i \in [n]$ it follows that $W$ is a feasible solution and $I'$ is a yes-instance.

$(\impliedby)$ Assume that $I'$ is a yes-instance and that we have some non-strict 
temporal walk $W$ that visits at least one vertex in $X_i$ for all $i  \in [n]$. We 
first claim that $W$ visits any vertex of the form $y_{i,j}$ for the first time during
 an even step. To see this, observe that every $y_{i,j}$ lies disconnected in its own component 
 in every odd step $t$, and so to visit any $y_{i,j}$ in an odd step $W$ would need to occupy 
 the component containing $y_{i,j}$ during step $t-1$ and finish step $t-1$ positioned at 
 $y_{i,j}$; hence $y_{i,j}$ was already visited in step $t-1$, which is even. Therefore, 
 in order for $W$ to visit any $y_{i,j}$ it must be positioned, during at least one even 
 step, at the component containing $x_j$. Now, to construct a collection 
 of subsets $\mathcal{S}' \subseteq \mathcal{S}$ with size $x\le k$, let 
 $\mathcal{S}' = \{S_j : W \text{ visits the component containing } x_j \text{ during some even timestep}\}$. To see that $\mathcal{S}'$ is a cover 
 of $U$ with size $x \leq k$, observe that $W$ visits at least one vertex $y_{i,j}$ for every 
 $i \in [n]$; thus, by the reduction, for every $i \in [n]$ the element $a_i$ is contained in 
 set $S_j$ for some $S_j \in \mathcal{S}'$.
 It follows 
 that the union of $\mathcal{S}'$'s elements covers $U$, and so $I$ is a yes-instance.
\end{proof}

\section{Conclusion}
In this paper we have initiated the study of temporal exploration problems from the viewpoint
of parameterized complexity. For both strict and non-strict temporal walks, we have
shown several variants of the exploration problem to be
in \textsf{FPT}.
For the variant where we are given a family of vertex subsets and need to visit only one
vertex from each subset, we have shown $\textsf{W}[2]$-hardness for both the strict and
the non-strict model for parameter~$L$.
For non-strict temporal exploration, we have shown that the problem can be solved
in polynomial time if~$\gamma$, the maximum number of connected components per step, is bounded by~$2$.
An interesting question for future work is to determine whether \textsc{NS-TEXP} with parameter
$\gamma$ is in \textsf{FPT} or at least in \textsf{XP} (i.e., admits a polynomial-time
algorithm for each fixed value of~$\gamma$).
Another interesting question is whether \textsc{$k$-arbitrary NS-TEXP} is in \textsf{FPT}
for parameter~$L$.
\bibliographystyle{plain}
\bibliography{fpttexp}

\end{document}